\newcommand{\bra}[1]{{\left\langle{#1}\right\vert}}
\newcommand{\ket}[1]{{\left\vert{#1}\right\rangle}}
\newcommand{\qw}[1][-1]{\ar @{-} [0,#1]}
\newcommand{\qwx}[1][-1]{\ar @{-} [#1,0]}
\newcommand{\gate}[1]{*+<.6em>{#1} \POS ="i","i"+UR;"i"+UL **\dir{-};"i"+DL **\dir{-};"i"+DR **\dir{-};"i"+UR **\dir{-},"i" \qw}
\newcommand{\meter}{*=<1.8em,1.4em>{\xy ="j","j"-<.778em,.322em>;{"j"+<.778em,-.322em> \ellipse ur,_{}},"j"-<0em,.4em>;p+<.5em,.9em> **\dir{-},"j"+<2.2em,2.2em>*{},"j"-<2.2em,2.2em>*{} \endxy} \POS ="i","i"+UR;"i"+UL **\dir{-};"i"+DL **\dir{-};"i"+DR **\dir{-};"i"+UR **\dir{-},"i" \qw}
\newcommand{\control}{*!<0em,.025em>-=-<.2em>{\bullet}}
\newcommand{\ctrl}[1]{\control \qwx[#1] \qw}
\newcommand{\multigate}[2]{*+<1em,.9em>{\hphantom{#2}} \POS [0,0]="i",[0,0].[#1,0]="e",!C *{#2},"e"+UR;"e"+UL **\dir{-};"e"+DL **\dir{-};"e"+DR **\dir{-};"e"+UR **\dir{-},"i" \qw}
\newcommand{\ghost}[1]{*+<1em,.9em>{\hphantom{#1}} \qw}
\newcommand{\lstick}[1]{*!R!<.5em,0em>=<0em>{#1}}
\newcommand{\Qcircuit}{\xymatrix @*=<0em>}
\definecolor{webgreen}{rgb}{0,.5,0}
\definecolor{webblue}{rgb}{0,0,.5}
\DeclareMathOperator{\tr}{Tr}
\numberwithin{equation}{section}
\newtheorem{theorem}{Theorem}
\newtheorem{prop}{Proposition}
\newtheorem{proposition}{Proposition}
\newtheorem{lemma}[theorem]{Lemma}
\newtheorem{conjecture}{Conjecture}
\newtheorem{definition}{Definition}
\newtheorem{problem}{Problem}
\newcommand{\one}{\mathds 1}
\DeclareMathOperator{\End}{End}
\newcommand{\opn}{\operatorname}
\newcommand{\C}{\mathbb{C}}
\newcommand{\expref}[2]{\texorpdfstring{\hyperref[#2]{#1~\ref{#2}}}{#1~\ref{#2}}}
\newcommand{\expreft}[2]{\texorpdfstring{\hyperref[#2]{#1}}{#1}}
\newcommand{\revise}[1]{}
\newcommand{\algo}{\mathcal}
\newcommand{\negl}{\opn{negl}}
\newcommand{\poly}{\opn{poly}}
\newcommand{\KeyGen}{\ensuremath{\mathsf{KeyGen}}\xspace}
\newcommand{\Enc}{\ensuremath{\mathsf{Enc}}\xspace}
\newcommand{\Dec}{\ensuremath{\mathsf{Dec}}\xspace}
\newcommand{\Homorcl}{\ensuremath{\mathsf{Hom}}\xspace}
\newcommand{\Mint}{\ensuremath{\mathsf{Mint}}\xspace}
\newcommand{\Verify}{\ensuremath{\mathsf{Verify}}\xspace}
\newcommand{\inrand}{\in_R} 
\newcommand{\prob}{\opn{Pr}}
\newcommand{\states}{\mathfrak D}
\newcommand\supp{\textbf{supp}}
\newcommand\Eval{\ensuremath{\mathsf{Eval}}\xspace}
\title{On quantum obfuscation}
\author{
Gorjan Alagic \thanks{Department of Mathematical Sciences, University of Copenhagen. \textsf{galagic@gmail.com}}
\qquad
Bill Fefferman \thanks {Joint Center for Quantum Information and Computer Science (QuICS), University of Maryland. \textsf{wjf@umd.edu}}
}
\begin{document}

\maketitle

\abstract{Encryption of data is fundamental to secure communication in the modern world. Beyond encryption of data lies \emph{obfuscation}, i.e., encryption of functionality. It is well-known that the most powerful means of obfuscating classical programs, so-called ``black-box obfuscation,'' is provably impossible~\cite{BGIRSVY01}. For years since, obfuscation was believed to always be either impossible or useless, depending on the particulars of its formal definition. However, several recent results have yielded candidate schemes that satisfy a definition weaker than black-box, and yet still have numerous applications.

In this work, we initialize the rigorous study of obfuscating programs \emph{via quantum-mechanical means.} We define notions of quantum obfuscation which encompass several natural variants. The input to the obfuscator can describe classical or quantum functionality, and the output can be a circuit description or a quantum state. The obfuscator can also satisfy one of a number of obfuscation conditions: black-box, information-theoretic black-box, indistinguishability, and best-possible; the last two conditions come in three variants: perfect, statistical, and computational. We discuss a number of applications, including CPA-secure quantum encryption, quantum fully-homomorphic encryption, and public-key quantum money. We then prove several impossibility results, extending a number of foundational papers on classical obfuscation to the quantum setting. We prove that quantum black-box obfuscation is impossible in a setting where adversaries can possess more than one output of the obfuscator (possibly even on the same input.) In particular, generic transformation of quantum circuits into black-box-obfuscated quantum circuits is impossible. We also show that statistical indistinguishability obfuscation is impossible, up to an unlikely complexity-theoretic collapse. Our proofs involve a new tool: chosen-ciphertext-secure encryption of quantum data, which was recently shown to be possible provided that quantum-secure one-way functions exist~\cite{ABFGSS16}. 

We emphasize that our results leave open one intriguing possibility: black-box obfuscation of classical or quantum circuits into a single, uncloneable  quantum state. This indicates that, in spite of our results, quantum obfuscation may be significantly more powerful than its classical counterpart.

\newpage
\tableofcontents
\newpage

\section{Introduction}\label{sec:intro}

The ability to encrypt data is central to modern communications. Basic methods for performing this task with privately-exchanged encryption keys have been known for hundreds of years. More advanced, public-key encryption methods were developed much more recently, beginning with the work of Merkle~\cite{Merkle78} and Diffie and Hellman~\cite{DH76} in the 1970s. These public-key methods have found widespread practical application in virtually all Internet communications. More advanced theoretical methods for encrypting data, such as fully-homomorphic encryption, have only been discovered recently~\cite{Gentry09}, but show great promise for practical application.

Arguably the most powerful encryption ability is \emph{obfuscation}; this is the ability to \emph{encrypt functionality}. Obfuscation implies (with some caveats) the ability to perform almost any cryptographic task imaginable, including public-key and fully-homomorphic encryption. Unlike in the case of data encryption, our theoretical understanding of obfuscation is still fairly limited. 

To understand obfuscation, it is useful to think about an obvious application: protecting intellectual property in software. In this setting, a software developer wishes to distribute their software to end users. However, the code contains a number of trade secrets which the developer does not want to become public. In order to maintain these secrets, the publisher passes the software through an \emph{obfuscation algorithm} (or obfuscator) prior to publishing. In this application, the obfuscator must be an efficient algorithm that satisfies three core properties: 
\begin{enumerate}\label{def:obf-informal}
\item \emph{functional equivalence:} the input/output functionality of the input program does not change;
\item \emph{polynomial slowdown:} if the input program is efficient, then the output program is efficient;
\item \emph{obfuscation:} the code of the output program is ``hard to understand.''
\end{enumerate}
The last condition can be formulated rigorously in a number of ways. One possibility is the so-called ``virtual black-box'' condition, which says that the obfuscated program is no more useful than an impenetrable box which simply accepts inputs and produces outputs. While this condition appears to be too strong in the classical world, there are other formulations (with varying levels of strength and usefulness) which may be achievable.

The study of encrypting classical data and classical programs is significantly complicated by the advent of \emph{quantum computation.} One well-known consequence of the presence of quantum computers is that certain data encryption schemes, such as those based on the hardness of factoring or the discrete logarithm, are no longer secure. It is conceivable that certain classical obfuscation schemes are also not secure against quantum adversaries. On the other hand, quantum mechanics also appears to enable certain cryptographic tasks (such as information-theoretically secure key exchange) that are impossible classically. It is thus natural to ask what quantum computation means for obfuscation of programs. In particular, we would like to answer the following questions:

\begin{itemize}
\item what are some natural formulations of quantum-mechanical program obfuscation?
\item is it possible to quantumly obfuscate classical and/or quantum programs?
\item which of the classical results about obfuscation carry over to the quantum setting?
\item are there applications of quantum obfuscation that are impossible classically?
\end{itemize}

We remark that, in order to address the above questions, we must also properly address the question of encrypting quantum data---a strictly simpler task than encrypting functionality. While information-theoretic encryption of quantum data has been considered before, in this setting we are interested in encryption of quantum data \emph{with computational assumptions}\footnote{Note that information-theoretic obfuscation is impossible if the adversary can execute the obfuscated program on all possible inputs; indeed, a computationally unbounded adversary can use this ability to learn everything there is to know about the program.}. This latter subject has not yet received significant attention in literature.

Before continuing, we draw attention to the distinction between obfuscating \emph{programs} and obfuscating \emph{circuits}. While these two forms of obfuscation are closely related, there are some important technical differences. In this work, as in most theoretical works on obfuscation, we will focus on obfuscation of circuits. We view the circuit model as more convenient; it also tends to be preferred in the theoretical literature on both cryptography and quantum computation.

\subsection{Background}

We now briefly review the current state of affairs in research on obfuscation and quantum encryption with computational security assumptions. The classical case has been studied significantly. On the other hand, quantum obfuscation has received little to no attention, outside the proposal in~\cite{AJJ14}. Quantum encryption is an active current area of research.

\subsubsection{Classical obfuscation} 

Ad-hoc obfuscation of software has been a fairly common practice for some time. In fact, simply compiling a program can be viewed as a form of obfuscation.  The earliest mention of obfuscation in the modern study of theoretical cryptography appears to be in the famous paper of Diffie and Hellman~\cite{DH76}. There, it was suggested that public-key cryptosystems might be constructible via obfuscation of private-key schemes; this was viewed as a reasonable possibility because writing code in an obfuscated manner seems relatively easy in practice. 

The first major result in classical obfuscation was the 2001 proof by Barak et al. that virtual black-box obfuscation is impossible~\cite{BGIRSVY01, BGIRSVY12}. Their definition is based on the \emph{simulation paradigm}. More precisely, the obfuscation condition (i.e., the third condition in the \expreft{previous section}{def:obf-informal}) states that any efficient algorithm with access to an obfuscated circuit should be simulable by another efficient algorithm with only oracle (i.e., black-box) access to the original functionality. This definition is very natural in the setting of the aforementioned ``software intellectual property protection'' application: the end user can only learn that which is learnable by simply running the program. Barak et al. proved that  there exist circuit families which are unobfuscatable under this definition. They also showed that some of the most sought-after applications of black-box obfuscation are impossible. For instance, they showed that private-key encryption schemes cannot be transformed to public-key ones by obfuscating the encryption circuits in a generic manner.

The years following the Barak et al. result saw some limited progress in theoretical obfuscation. It was proved possible for some limited forms of functionality~\cite{CD08, Wee05}, and some additional limits were placed, e.g., on black-box obfuscation with auxiliary input~\cite{GK05}. An important step in formulating feasible notions of obfuscation was taken by Goldwasser and Rothblum; they defined \emph{indistinguishability obfuscation} and \emph{best-possible obfuscation}~\cite{GR07}. Both of these definitions alter the obfuscation condition, while leaving the functional-equivalence and polynomial-slowdown conditions unchanged. Under indistinguishability, it is required that the obfuscator maps functionally-equivalent circuits to indistinguishable distributions. Under best-possible, the obfuscator maps any circuit to a circuit from which the end user can ``learn the least.'' Both definitions have a perfect, statistical, and computational variant. Goldwasser and Rothblum proved that the two definitions are equivalent, and that the perfect and statistical versions are impossible (unless the PH collapses)~\cite{GR07}. This left one possibility: computational indistinguishability obfuscation. It was widely believed that computational indistinguishability was too weak of a condition to provide any interesting applications.

In 2013, in a breakthrough result, Garg et al. proposed a convincing candidate for computational indistinguishability obfuscation~\cite{GGHRSW13}. They proposed an obfuscation scheme for NC1 circuits, based on the presumed hardness of a problem in multilinear maps; they also showed how to use fully-homomorphic encryption (with NC1 decryption circuits) to ``bootstrap'' their NC1 scheme to obfuscation for all circuits. Around the same time, another breakthrough by Sahai and Waters showed how to use a computational indistinguishability obfuscator to achieve a wide-range of applications, via a new ``punctured programs'' technique~\cite{SW14}. These applications include chosen-ciphertext-secure public-key encryption, injective trapdoor functions, and oblivious transfer. Sahai and Waters suggested that the applications were so wide-ranging that indistinguishability obfuscation might become a ``'central hub' for cryptography''~\cite{SW14}. These two breakthroughs were followed by a flurry of new activity in the area, including several new proposals and applications~\cite{BGKPS14, BCCGKPR14, BZ14, BR14, GGHW14, HSW14}.

\subsubsection{Quantum obfuscation} 

Quantum obfuscation is essentially an unexplored topic, and the present work appears to be the first rigorous treatment of the foundational questions. The question of whether quantum obfuscation is possible was posed as one of Scott Aaronson's ``semi-grand challenges'' for quantum computation~\cite{Aar05}. Since so little work on quantum obfuscation has appeared, our brief discussion will also mention some results that we believe are related. 

In~\cite{Aar09}, Aaronson proposed two relevant results. The first was a \emph{complexity-theoretic no-cloning theorem}, stating that cloning an unknown, random state by means of a black-box ``reflection oracle'' requires exponentially many queries. The second theorem stated that an oracle exists relative to which ``software copy-protection'' is possible. Unfortunately, a full version of~\cite{Aar09} with proofs never appeared, although the complexity-theoretic no-cloning theorem was eventually proved in a paper on quantum money~\cite{AC12}. In related work, Mosca and Stebila proposed a black-box quantum money scheme, and suggested the possibility of using a quantum circuit obfuscator in place of the black box~\cite{MS10}.

More recently, Alagic, Jeffery and Jordan proposed obfuscators for both classical (reversible) circuits and quantum circuits, based on ideas from topological quantum computation~\cite{AJJ14}. The proposed obfuscator compiles the circuits into braids using certain high-dimensional representations of the braid group, and then applies an algorithm for putting braids into normal form. Although it is efficient, this algorithm does not satisfy any of the aforementioned obfuscation definitions; instead, it satisfies perfect indistinguishability for a restricted set of circuit equivalences. The usefulness of such an obfuscator is unclear at this time.

\subsubsection{Quantum encryption}

In order to discuss quantum obfuscation of functionality, we require a more basic primitive: quantum encryption of data. As we will see later, information-theoretic encryption is insufficient for our purposes, and we must thus rely on computational assumptions. In particular, we require (i.) reusability of the key, and (ii.) chosen-ciphertext security. Recent results have shown that this is possible to achieve under the assumption that quantum-resistant one-way functions exist~\cite{ABFGSS16}. For readability, we briefly summarize the relevant results from~\cite{ABFGSS16} below, and in more detail in \expref{Section}{sec:encryption}.
\begin{enumerate}
\item \textbf{Quantum encryption schemes.} One can define a notion of symmetric-key encryption scheme for quantum states, with reusable keys; these schemes consist of three quantum algorithms (key generation, encryption, and decryption) which satisfy correctness: under a fixed key, encryption followed by decryption must be equivalent to the identity. Such schemes first appeared in~\cite{BJ15}.
\item \textbf{Chosen-ciphertext security for quantum encryption.} One may also define IND-CCA1 security (or \emph{indistinguishability of ciphertexts under non-adaptive chosen ciphertext attacks}) for these schemes; this formalizes the idea of a ``lunchtime attack,'' where an adversary has complete access to all aspects of the encryption scheme except the key itself, and is tasked with decrypting a challenge ciphertext later (presumably after lunch.) 
\item \textbf{An IND-CCA1-secure construction.} If quantum-secure one-way functions (qOWF) exist, then so do IND-CCA1-secure symmetric-key encryption schemes for quantum states. These qOWFs are deterministic classical functions which are easy to compute, but hard to invert for quantum adversaries. 
\end{enumerate}

\subsection{Summary of results}

In this section, we summarize our results and discussions. These are divided by subject, with quantum encryption covered in \expref{Section}{sec:encryption}, quantum black-box obfuscation in \expref{Section}{sec:black-box}, and quantum indistinguishability obfuscation in \expref{Section}{sec:indistinguishability}.

\subsubsection{Quantum black-box obfuscation}

\paragraph{Definitions.} Our main results concern definitions, applications, and (im)possibility of quantum obfuscation in the virtual black-box setting. We will begin by defining the following.

\begin{enumerate}
\item \textbf{Quantum black-box obfuscator.} This is a polynomial-time quantum algorithm $\algo O$ which accepts quantum circuits $C$ as input, and produces quantum states $\algo O(C)$ as output. It preserves functionality, in the sense that there is a publicly known way to use $\algo O(C)$ and any input state $\ket{\psi}$ to produce the state $C \ket{\psi}$\,. It satisfies a black-box condition, which states that for polynomial-time quantum algorithms, possession of $\algo O(C)$ can be simulated by black-box access to $C$. This definition is a natural analogue of the classical black-box definition given in~\cite{BGIRSVY12}.
\item \textbf{Quantum ``two-circuit'' black-box obfuscator.} This obfuscator is precisely as above, except the obfuscation condition is strengthened to hold over arbitrary \emph{pairs of circuits} $(C_1, C_2)$. For us, this definition will be primarily useful because of its role in establishing certain impossibility results.
\item \textbf{Information-theoretic quantum black-box obfuscator.} This is a modification of the above definition, in which we posit that \emph{any} adversary with access to $\algo O(C)$ can be simulated by a \emph{polynomial-time} quantum simulator with black-box access to $C$. This definition is impossible classically, for obvious reasons: both $\algo O(C)$  and $\algo O$ can be copied and reused an arbitrary number of times, enabling unbounded adversaries to discover everything about $C$.
\end{enumerate}

\paragraph{Impossibility.} We prove three impossibility results, which place several important restrictions on quantum obfuscation. Our impossibility proofs are based on the ideas of Barak et al.~\cite{BGIRSVY12}, with several important quantum adaptations, and a new quantum ingredient: the aforementioned IND-CCA1 quantum encryption.
\begin{enumerate}
\item \textbf{Two-state black-box obfuscation is impossible.} We prove that there exist families of circuit pairs which can reveal a secret if one is in possession of a circuit description for both of them, but not if one only has black-box access. This impossibility persists even if the obfuscation output is a quantum state, as opposed to a circuit description. Unlike the other results, it is also true even if the obfuscated states are \emph{not reusable.}
\item \textbf{If qOWFs exist, then obfuscation with more than one output is impossible.} For this proof, we combine the pairs from the circuit families in the two-circuit impossibility proof in order to build a single unobfuscatable family. The ability to execute obfuscated states from this family \emph{on themselves} is crucial here, and has two requirements: (i.) access to more than one obfuscation, even if the obfuscations are quantum states, and (ii.) secure encryption, which in turn requires the existence of qOWFs. This result applies both to quantum black-box obfuscators (as in the first definition above) and the information-theoretic variant (as in the third definition above.) 
\item \textbf{Classical algorithms for quantum obfuscation are impossible, unconditionally.} This result follows directly from the previous result and Application 1 below. It can be viewed as an extension of the original Barak et al. impossibility result to the case of quantum functionality and quantum adversaries.
\end{enumerate}
We emphasize that, while our techniques are very similar to Barak et al., our results are not a simple consequence of the fact that classical functions (and in particular, the Barak et al. unobfuscatable functions) are special cases of quantum functions. This ``special case'' argument fails for obvious reasons when the output of the obfuscator can be a quantum state. As it turns out, it also fails when the output is a quantum circuit; briefly, the reasons are (i.) it is now permitted to specify (even approximate) quantum circuits for classical functions, (ii.) the black-box simulator now has quantum access, which is significantly more powerful than classical (see, e.g., ~\cite{HR14}) and (iii.) the Barak et al. adversary is insufficient since it can only perform classical gates homomorphically. The last point is where the aforementioned quantum encryption tools become necessary.

\paragraph{Applications.} We then move on to discuss potential applications of quantum black-box obfuscators. We emphasize that (with the exception of the first one), all of these applications are still possible \emph{in some form} in spite of the above impossibility result. We view this as a strong indication that quantum obfuscation should be studied further. While some of the applications are analogues of known classical applications (as outlined in \cite{BGIRSVY12},) the last is special to the quantum setting. We are certain that many other quantum-specific applications are possible, given the combined advantage of obfuscation and no-cloning.

\begin{enumerate}
\item \textbf{Quantum-secure one-way functions.} We show that, if there exists a classical probabilistic algorithm for quantum obfuscation, then quantum-secure one-way functions exist. The above impossibility result rules this out, but the implication is nonetheless interesting; for one, it enables the very proof of the impossibility result itself! The one-way functions are essentially the functions computed by the obfuscator (with fixed randomness) on circuits with a ``hidden output.'' We are unable to extend this application to the setting of efficient quantum algorithms for obfuscation. We leave this as an interesting open problem, and note its connection to developing foundational primitives for quantum encryption.
\item \textbf{IND-CPA-secure private-key quantum encryption.} In this application, the obfuscation algorithm can be quantum; moreover, we do not demand the existence of one-way functions or any other primitive.
\item \textbf{qOWF imply IND-CPA public-key encryption.} This application combines IND-CCA1-secure private-key encryption (which follows from qOWFs) with obfuscation of the encryption circuits. The result is public-key encryption of quantum states without the need for trapdoor permutations (unlike in~\cite{ABFGSS16} and, indirectly, in~\cite{BJ15}.)
\item \textbf{qOWF imply IND-CPA quantum fully homomorphic encryption.} This application combines the previous application, together with obfuscation of a universal decrypt-compute-encrypt circuit. Depending on the properties of the obfuscator, it may also satisfy \emph{compactness} (the requirement that communication between client and server does not scale with the size of the computation.)
\item \textbf{Public-key quantum money}. Using circuit obfuscation to produce public-key quantum money was first proposed by Mosca and Stebila~\cite{MS10}, using a complexity-theoretic no-cloning theorem proposed by Aaronson~\cite{Aar09} and proved by Aaronson and Christiano~\cite{AC12}. We outline the ideas here, and discuss the new limitations placed by our results.
\end{enumerate}

We emphasize that all the above applications except quantum money also work for achieving \emph{classical functionality} from a quantum obfuscator; however, depending on the details of the obfuscator and the application, this may require quantum algorithms for encryption and decryption, or even quantum ciphertexts.

\subsubsection{Quantum indistinguishability obfuscation}

Lastly, we consider an alternative formulation of obfuscation, motivated by the classical definitions of indistinguishability obfuscation and best-possible obfuscation, as set down by Barak et al.~\cite{BGIRSVY12} and Goldwasser and Rothblum~\cite{GR07}. We establish quantum analogues of the central results in those classical papers. In this setting, rather than comparing the obfuscation of the circuit to that of a black-box, we compare it to the obfuscations of other, functionally-equivalent circuits. Starting with the new definitions, our results are as follows.

\begin{enumerate}
\item \textbf{Quantum indistinguishability obfuscator.} Just as in the black-box definition, this is a polynomial-time quantum algorithm $\algo O$ which accepts quantum circuits $C$ as input, and produces ``functionally-equivalent'' quantum states $\algo O(C)$ as output. The obfuscation condition now states that functionally equivalent circuits are mapped to \emph{indistinguishable} states. Based on the kind of indistinguishability deployed in the definition, there are three variants of an indistinguishability obfuscator: perfect, statistical, and computational.
\item \textbf{Quantum best-possible obfuscator.} This is an algorithm precisely as above, except for the obfuscation condition: it now states that $\algo O(C)$ is the state that ``leaks least,'' among all states which are ``functionally-equivalent'' to $C$. There are again three variants: perfect, statistical, and computational.
\item \textbf{Equivalence of definitions.} We prove that each of the three variants of quantum indistinguishability obfuscation is equivalent to the analogous variant of quantum best-possible obfuscation, so long as the obfuscator is efficient.
\item \textbf{Impossibility of perfect and statistical indistinguishability obfuscation.} We end with a quantum version of the main result of~\cite{GR07}: a proof that perfect and statistical quantum indistinguishability obfuscation is impossible, unless coQMA is contained in QSZK. We remark that an analogous containment in the classical setting (i.e., coMA $\subseteq$ SZK) would imply a collapse of the polynomial-time hierarchy to the second level. Moreover, for the case of obfuscating arbitrary quantum computations (i.e., completely positive, trace-preserving maps), we obtain that statistical quantum indistinguishability obfuscation would imply that PSPACE is contained in QSZK. One consequence of these results is that extending the obfuscator proposed in~\cite{AJJ14} to full indistinguishability is impossible, barring highly unlikely collapses of complexity classes.
\item \textbf{Application: witness encryption for QMA.} Motivated by an analogue discussed in~\cite{GGSW13, GGHRSW13}, we show that a quantum indistinguishability obfuscator enables witness encryption for QMA. A witness encryption scheme for a language $L$ in QMA encrypts plaintexts $x$ using a particular instance $l$. The security condition states that, if $l \in L$, then a valid witness $w$ for $l \in L$ allows decryption; on the other hand, if $l \notin L$, then ciphertexts are indistinguishable. While witness encryption has several applications classically~\cite{GGSW13}, the quantum analogue has not been considered previously.
\end{enumerate}

We remark that, in the classical setting, indistinguishability obfuscation also implies functional encryption~\cite{GGHRSW13} and many more applications through the very successful ``punctured programs'' technique developed by Sahai and Waters~\cite{SW14}. We suspect that these results can also be adapted to the quantum setting, but leave them open for now.

\subsection{Notation and terminology}\label{sec:notation}

In this section, we set down some notation and basic terminology which we will use throughout the rest of the paper.

\subsubsection{Classical}

We will assume that the state space of a classical device can be identified with sets of bitstrings, i.e., $\{0, 1\}^n$ for some positive integer $n$. The notation $x \inrand \{0, 1\}$ will mean that $x$ is an $n$-bit string selected uniformly at random. The set of all bitstrings (of arbitrary length) will be denoted by $\{0, 1\}^*$. Classical functions will then be maps $f : \{0, 1\}^n \rightarrow \{0, 1\}^m$ from one set of bitstrings to another. We will also sometimes consider function families, written $f : \{0, 1\}^* \rightarrow \{0, 1\}^*$; these can be thought of as a function family $\{f_n\}_{n>0}$ indexed by the input size $n$. 

A classical circuit $C$ is a sequence of local boolean gates which, when composed together, implement some (in general irreversible) function $f_C: \{0, 1\}^n \rightarrow \{0, 1\}^m$. The input size of $C$ is $n$, the output size is $m$, and the number of gates is denoted by $|C|$. A probabilistic circuit is also a circuit, but with the input bits divided into two registers: the input register, and the ``coin'' register. A normal execution of a probabilistic circuit involves initializing the coin register with completely random bits, and inserting the input into the input register. We will frequently discuss \textbf{circuit ensembles}; these are infinite families $\{C_n\}_{n > 0}$ of circuits, one for each possible input size, so that the size of circuit $C_n$ is bounded by some fixed polynomial function of $n$.  We say that a circuit ensemble is \textbf{uniform} if there is a deterministic polynomial-time Turing Machine which, on input $1^n$, outputs a classical description of $C_n$.  We will also sometimes make use of \textbf{distributions of circuit ensembles}; these are infinite families $\mathcal C = \{\mathcal C_n\}_{n > 0}$ where each $\mathcal C_n$ is a finite family of circuits of input size $n$, along with a probability distribution $P_{\mathcal C, n}$. For a bitstring $x$, the notation $\mathcal C(x)$ will then denote the probability distribution (on bitstrings) resulting from running a random circuit from the family $\mathcal C_{|x|}$, selected according the distribution $P_{\mathcal C, |x|}$.

A \textbf{deterministic classical algorithm} $\algo A$ is simply a circuit ensemble. Running $\algo A$ on an input bitstring $x$ involves selecting the circuit with the appropriate input size, and executing it with input $x$. If the circuit ensemble is uniform, we will say that $\algo A$ is efficient; more precisely, it is then a classical deterministic polynomial-time algorithm (\textbf{PT} for short.) A \textbf{probabilistic algorithm} $\algo A'$ is an algorithm whose circuits are probabilistic. Running $\algo A'$ on an input bitstring $x$ involves selecting the circuit with the appropriate input size, initializing its coin register with uniformly random bits, and then executing it with input $x$. If the circuits of $\algo A'$ are polynomial-time uniform, we say that $\algo A'$ is an efficient, or classical probabilistic polynomial-time algorithm (\textbf{PPT} for short.) We will frequently use PPTs to model the most general efficient classical algorithms.

\subsubsection{Quantum}

For our purposes, the space of pure states of a quantum device will be identified with a Hilbert space $\mathcal H_n \cong (\C_2)^{\otimes n}$ of a finite number $n$ of qubits. We will identify some fixed orthonormal basis (called the \emph{computational basis}) of $\mathcal H_n$ with the corresponding space $\{0, 1\}^n$ of classical states, so that, e.g, $\ket{x}$ for $x \in \{0, 1\}^n$ denotes a basis element of $\mathcal H_n$. The space of density operators (i.e., general quantum states) of $n$ qubits will be denoted $\states (\mathcal H_n)$; this is the set of positive semidefinite, Hermitian trace-one operators in $\End(\mathcal H_n)$. A state in $\states (\mathcal H_n)$ can be interpreted as a probabilistic mixture $\sum_j p_j \ket{\varphi_j}\bra{\varphi_j}$ of pure states, albeit not in a unique way. We will discuss valid quantum transformations of three types. The first are measurements, which act on a state $\ket{\psi} \in \mathcal H_n$ by projecting some or all of the qubits into the computational basis states $\{\ket{0}, \ket{1}\}$. The second are unitary maps, i.e., linear operators $U: \mathcal H_n \rightarrow \mathcal H_n$ satisfying $U^\dagger U = \one_n$, where $\one_n$ denotes the $n$-qubit identity operator. The third are CPTP maps, i.e., completely positive trace-preserving maps $\Phi : \states (\mathcal H_n) \rightarrow \states (\mathcal H_m)$. CPTP maps are the most general type of evolution, encompassing unitary maps, measurement, and adding or discarding (or tracing out) qubits. For example, a unitary operator $U \in U(\mathcal H_n)$ can be expressed as a CPTP map by writing $\rho \mapsto U\rho U^\dagger$, where $\rho \in \states (\mathcal H_n)$.

A \textbf{quantum circuit} $C$ is a sequence of local unitary gates on a fixed number (say $n$) of qubits; these gates , when composed together, implement some unitary operator $U_C \in U(2^n)$. Definitions of circuit ensembles and distributions over circuit ensembles are defined precisely as in the classical case. A \textbf{quantum polynomial time algorithm} (or \textbf{QPT} for short) $\mathcal A$ is a uniform ensemble of quantum circuits; algorithms can also include measurements and discarding (or tracing-out) of subsystems, so long as these also admit efficient classical descriptions. The input and output size of a quantum algorithm can vary, and will have to be deduced from context. For example, given a QPT $\mathcal A$, the expression $\Pr[\mathcal A(\ket{0^n}) = 1]$ will take the value zero unless $\mathcal A$ has a specific, labeled output qubit which is measured at the end of the computation.

\section{Quantum encryption}\label{sec:encryption}

In this section, we discuss a recently developed encryption scheme for quantum states, with computational assumptions~\cite{ABFGSS16}. In \expref{Section}{sec:pseudo}, we briefly recall how to construct a classical function which appears pseudorandom to quantum adversaries, by means of a function which is one-way against quantum adversaries. In \expref{Section}{sec:scheme}, we define a notion of symmetric-key quantum encryption, together with associated notions of IND-CPA and IND-CCA1 security. We then describe a scheme which is IND-CCA1-secure under the assumption that quantum-secure one-way functions exist. 

\subsection{Quantum-secure pseudorandomness}\label{sec:pseudo}

We begin with two primitives for encryption: quantum-secure one-way functions, and quantum-secure pseudorandom functions. These are both classical, efficiently computable functions which are in some sense resistant to quantum analysis. In the case of one-way functions, we demand that inversion is hard; in the case of pseudorandom functions, we demand that distinguishing from perfectly random functions is hard.

\begin{definition}\label{def:quantum-secure-owf}
A PT-computable function $f:\{0,1\}^* \rightarrow \{0, 1\}^*$ is a quantum-secure one-way function (qOWF) if for every QPT $\algo A$, 
$$
\emph{Pr}_{x \inrand \{0, 1\}^n} \left[\algo A (f(x), 1^n) \in f^{-1}(f(x)) \right] \leq \negl(n)\,,
$$
where the probability is taken over $x \inrand \{0, 1\}^n$ as well as the measurements of $\algo A$.
\end{definition}

\begin{definition}\label{def:quantum-secure-prf}
A PT-computable function family $f_k : \{0,1\}^n \rightarrow \{0, 1\}^m$ is a quantum-secure pseudorandom function (qPRF) if for every QPT $\algo A$, 
$$
\left|\prob_{k \inrand \{0, 1\}^n} [\algo A^{f_k}(1^n) = 1] - \prob_{g \inrand \mathcal F_{n, m}}[ \algo A^g(1^n) = 1]\right|
\leq \negl(n)\,,
$$
where $\mathcal F_{n, m}$ denotes the space of all functions from $\{0,1\}^n$ to $\{0,1\}^m$.
\end{definition}

Classically, one-way functions are the fundamental primitive underpinning encryption. A series of basic results shows that one-way functions can be turned into pseudorandom functions, which can then be used for defining probabilistic encryption schemes. This series of results carries over to the quantum-secure case without much of a change (although some proofs are somewhat more involved.) For example, it is known how to construct qPRFs from qOWFs.

\begin{theorem}\label{thm:qOWF-implies-qPRF}
If quantum-secure one-way functions exist, then so do quantum-secure pseudorandom functions.
\end{theorem}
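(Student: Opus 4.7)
The plan is to follow the classical blueprint in two stages, adapted so that each reduction remains sound against quantum distinguishers: first build a quantum-secure pseudorandom generator (qPRG) from the given qOWF, and then apply the Goldreich–Goldwasser–Micali (GGM) tree construction to obtain a qPRF. The overall statement is already known (this is implicit in Zhandry's work, combined with standard reductions), so the job here is to cite and sequence the quantum-safe pieces correctly.

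First, I would obtain a qPRG. Start with the given qOWF $f$ and, following H{\aa}stad--Impagliazzo--Levin--Luby, reduce qPRG security to the existence of a hard-core predicate for a (possibly length-increased) one-way function. For the hard-core step, invoke the quantum Goldreich--Levin theorem: a QPT predictor for the inner-product bit $\langle x, r\rangle$ given $(f(x), r)$ can be converted into a QPT inverter for $f$, with only polynomial overhead. Since all subsequent reductions in the HILL construction (entropy smoothing via pairwise-independent hashing, composition, length extension) are information-theoretic or use black-box classical reductions that respect quantum adversaries, the output is a qPRG $G : \{0,1\}^n \to \{0,1\}^{2n}$ whose output is computationally indistinguishable from uniform for every QPT.

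Second, I would apply the GGM tree construction: define $f_k(x_1 \cdots x_n) = G_{x_n} \circ \cdots \circ G_{x_1}(k)$ where $G_0, G_1$ are the two halves of $G$'s output. The subtlety is the security proof. If one only cares about classical-query adversaries that are internally quantum, the textbook hybrid argument goes through essentially unchanged: replace each level of the GGM tree by truly random labels, and a QPT distinguisher between adjacent hybrids yields a QPT distinguisher for $G$. If, however, one wants security under \emph{superposition} queries to the oracle (which is the more natural reading of $\algo A^{f_k}$ in the quantum setting), the standard hybrid breaks down because a superposition query touches exponentially many inputs at once. Here I would invoke Zhandry's small-range / semi-constant distribution technique: show that for any QPT making $q$ queries, the oracle $f_k$ is indistinguishable from the oracle obtained by replacing the tree labels at a random level by a small-range distribution, and then reduce to PRG security at that level via a polynomial-loss hybrid indexed by small-range labels rather than by input positions.

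The main obstacle is exactly this quantum-query step: ordinary hybrid arguments over $2^n$ inputs do not transfer because a quantum adversary's single query is not localized. Everything else (Goldreich--Levin, HILL, the combinatorial structure of GGM) is either already proved quantum-sound in the literature or goes through by inspection, so the bulk of the write-up consists of invoking \texttt{[Zhandry]} for the PRG$\,\to\,$PRF direction with quantum queries, while deferring the qOWF$\,\to\,$qPRG direction to standard quantum-safe versions of HILL and Goldreich--Levin.
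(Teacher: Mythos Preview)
Your proposal is correct and follows essentially the same two-stage approach as the paper: first lift HILL to the quantum-secure setting (the paper treats this as folklore, noting the reduction is black-box, while you spell out the quantum Goldreich--Levin ingredient), and then invoke Zhandry's result that the GGM construction remains secure against quantum-query adversaries. You give more detail than the paper's sketch, but the decomposition and the key citations are identical.
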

\begin{proof} (Sketch.) It is folklore that the well-known H{\aa}stad et al. result that pseudorandom generators can be constructed from any one-way function~\cite{HILL99} carries over to the quantum-secure case. Roughly speaking, the reasoning is that the reduction in the proof is done in a ``black-box'' way, i.e., only by feeding inputs into the adversary and then analyzing the resulting outputs. The quantum-secure case then simply involves replacing PPTs with QPTs in the appropriate places. Proving that the standard GGM construction~\cite{GGM86} of PRFs from pseudorandom generators is still secure in the setting of quantum adversaries is more involved; this was established by Zhandry~\cite{Zhandry2012}.
\end{proof}

\subsection{Symmetric-key encryption of quantum states}\label{sec:scheme}

It is well-known how to encrypt quantum states with information-theoretic security, via the so-called quantum one-time pad. To encrypt a single-qubit state $\rho$, we choose two classical bits at random, use them to select a random Pauli matrix $P \in \{\one, X, Y, Z\}$, and perform $\rho \mapsto P \rho P^\dagger$. To encrypt an $n$-qubit quantum state $\rho$, we select $r \inrand \{0,1\}^{2n}$ and apply
\begin{equation}\label{eq:quantum-one-time-pad}
\rho \longmapsto P_r \rho P_r^\dagger\,,
\end{equation}
where $P_r$ denotes the element of the $n$-qubit Pauli group indexed by $r$. 

One disadvantage of the quantum one-time pad is that parties must share two bits of randomness for every qubit which they wish to transmit securely. In particular, one cannot securely exchange multiple messages with the same key. To address this issue, we must settle for computational security assumptions and use pseudorandomness to select $r$. A general encryption scheme for quantum states is then defined as follows.

\begin{definition}\label{def:encryption-scheme}
A symmetric-key quantum encryption scheme is a triple of QPTs:
\begin{itemize}
\item (key generation) $\KeyGen : 1^n \longmapsto k \in \{0, 1\}^n$;
\item (encryption) $\Enc_k : \states (\mathcal H_m) \longrightarrow \states (\mathcal H_c)$;
\item (decryption) $\Dec_k : \states (\mathcal H_c) \longrightarrow \states (\mathcal H_m)$;
\end{itemize}
where $m$ and $c$ are polynomial functions of $n$, and the QPTs satisfy $\| \Dec_k \circ \Enc_k - \one_m \|_\diamond \leq \negl(n)$ for all $k \in \emph{\supp}\,\KeyGen(1^n)$.
\end{definition}

Public-key quantum encryption schemes are defined in an analogous manner. The encryption schemes we will need must produce ciphertexts which are computationally indistinguishable. In some cases, the ciphertexts will need to remain indistinguishable even to adversaries which possess oracle access to the encryption algorithm (and sometimes also even the decryption algorithm.) This security notion is captured by the following definition.

\begin{definition}\label{def:IND}
A symmetric-key quantum encryption scheme is IND-secure if for all QPTs $\algo A, \algo A'$,
$$
\left|\prob[ (\algo A' \circ \Enc_k \otimes \one_s \circ \algo A) \cdot 1^n = 1] -
\prob[ (\algo A' \circ \Xi_{\Enc_k \ket{0^m}\bra{0^m}} \otimes \one_s \circ \algo A) \cdot 1^n = 1] \right|
\leq \negl(n)\,,
$$ 
where $\Xi_\sigma: \rho \mapsto \sigma$ is the ``forgetful'' map, and $s$ is a polynomial function of $n$. If $\algo A$ and $\algo A'$ have oracle access to $\Enc_k$, then we say that the scheme is IND-CPA secure. If in addition $\algo A'$ has oracle access to $\Dec_k$, then we say that the scheme is IND-CCA1 secure.
\end{definition}

The two QPTs $\algo A$ and $\algo A'$ together model the adversary. The definition above captures the idea of a certain ``security game'' between an adversary and a challenger. The game proceeds in steps: (i.) the key is selected and the adversary receives access to the appropriate oracles, (ii.) after some computation, the adversary transmits the first part of a bipartite state $\rho_{ms}$ to a challenger, (iii.) the challenger either encrypts this or replaces it with the encryption of $\ket{0^m}\bra{0^m}$, and then returns the result to the adversary, and (iv.) the adversary must decide which choice the challenger made. The scheme is considered secure if the adversary can do no better than random guessing. As shown in~\cite{ABFGSS16}, this definition is equivalent to a security notion called \emph{semantic security}; roughly speaking, this notion captures the idea that anyone that tries to compute anything about a plaintext gains no advantage by possessing its encryption. In addition, \expref{Definition}{def:IND} is equivalent to several natural variants, where e.g., the challenger chooses to encrypt one of two messages provided by the adversary, or where the game is played over multiple rounds. The latter guarantees security of transmitting multiple ciphertexts produced via encryption with the same key.

We now show how to use qPRFs to construct simple symmetric-key quantum encryption schemes that satisfy all of the above security conditions.

\begin{theorem}\label{thm:IND-CCA1}
If quantum-secure pseudorandom functions exist, then so do IND-CCA1-secure symmetric-key quantum encryption schemes.
\end{theorem}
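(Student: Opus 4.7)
The plan is to build the scheme by combining three ingredients: a quantum-secure pseudorandom function to derive per-ciphertext keys from a single short master key, the quantum one-time pad to hide the plaintext, and a quantum authentication code (for concreteness, the Clifford code) to make decryption queries useless to a CCA1 adversary. Concretely, let $f_k : \{0,1\}^n \to \{0,1\}^{2m + \ell}$ be a qPRF, as guaranteed to exist by \expref{Theorem}{thm:qOWF-implies-qPRF} (though here we assume qPRFs directly). To encrypt an $m$-qubit state $\rho$, $\Enc_k$ samples $r \inrand \{0,1\}^n$, computes $(a, b, s) = f_k(r)$, applies the Pauli $X^a Z^b$ to $\rho$, and then applies a quantum authentication code $\mathrm{Auth}_s$ (whose key $s$ selects a random Clifford on a message+tag register initialized with $\ket{0^\ell}$ in the tag slots). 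The ciphertext is $(r, \mathrm{Auth}_s(P_{a,b}\rho P_{a,b}^\dagger))$. Decryption re-derives $(a,b,s)$ from $r$ and $k$, verifies authentication (outputting $\bot$ on the reject branch), and undoes the Pauli. Correctness is immediate for honestly generated ciphertexts.

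The IND-CPA proof proceeds by a standard two-step hybrid. First, invoke the qPRF security (\expref{Definition}{def:quantum-secure-prf}) to replace $f_k$ by a uniformly random function $g$; any distinguishing advantage translates to a qPRF distinguisher. Second, since every encryption query draws a fresh $r$, with overwhelming probability all $r$-values are distinct, so the derived Pauli keys $(a,b)$ are independent and uniform. Then the quantum one-time pad identity of \eqref{eq:quantum-one-time-pad} yields that each encryption output is exactly the maximally mixed state on the authenticated register (traced over $s$), independent of the plaintext. This makes the view in the real and simulated (zero-plaintext) games statistically close, establishing IND-CPA.

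To upgrade this to IND-CCA1, the key step is to simulate the decryption oracle in the pre-challenge phase using only the encryption-oracle transcript. After replacing $f_k$ by $g$, the authentication keys $s$ used on distinct $r$'s are independent and uniform. By the security of the Clifford authentication code against quantum adversaries, any decryption query whose $r$-tag did not appear in a previous encryption query passes verification with only negligible probability; such queries can be answered with $\bot$. Queries whose $r$-tag did appear can be answered by the simulator from a bookkeeping table of encryption inputs, since the authentication code forbids any non-trivial tampering with the ciphertext on that slot except with negligible probability. Having eliminated the decryption oracle, IND-CCA1 reduces to the IND-CPA argument above.

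The main obstacle is the decryption-oracle simulation, because a quantum CCA1 adversary may submit superposed ciphertexts and the distinct $r$-slots entangle the authentication keys across queries. Handling this requires a quantum-query-secure notion of authentication and a careful hybrid that replaces the real verification measurements with a purified simulator which projects onto the ``previously encrypted'' subspace; concentration of random Cliffords and the standard security of the Clifford code against general quantum attacks guarantee that this projection is close in diamond norm to the honest decryption map. Once this authentication-based oracle simulation is in hand, the rest of the proof is a routine chain of hybrids invoking \expref{Definition}{def:quantum-secure-prf} and the information-theoretic security of the one-time pad.
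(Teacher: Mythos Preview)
Your approach is workable in principle but significantly more elaborate than the paper's, and the added machinery is precisely what creates the ``main obstacle'' you flag at the end. The paper uses no authentication at all: it simply sets $\Enc_k(\rho) = |r\rangle\langle r| \otimes P_{f_k(r)}\,\rho\, P_{f_k(r)}^\dagger$ for fresh random $r$, with $\Dec_k$ measuring $r$ and undoing the Pauli. The key observation is that a decryption query on $(r,\sigma)$ reveals at most the value $f_k(r)$ at an $r$ of the adversary's choosing, while an encryption query reveals at most $(r,f_k(r))$ for a random $r$. Thus the entire CCA1 interaction can be simulated given oracle access to $f_k$, which is exactly what the qPRF distinguisher in \expref{Definition}{def:quantum-secure-prf} already has. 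After swapping $f_k$ for a truly random function, the challenge randomness $r^*$ is fresh with overwhelming probability, so its Pauli key is independent of the adversary's view and the quantum one-time pad finishes the proof. No authentication, no decryption-oracle simulation, no Clifford code.

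Your Clifford-authentication route is the standard manoeuvre when the decryption oracle would otherwise leak too much, but here the decryption oracle is \emph{already} no stronger than a qPRF oracle, so the extra layer buys nothing and costs you the very obstacle you describe. There is also a soft spot in your sketch: the claim that ``queries whose $r$-tag did appear can be answered by the simulator from a bookkeeping table of encryption inputs'' is problematic as stated, because the encryption inputs are quantum states and cannot be copied into a table. One can likely repair this by arguing directly at the level of the adversary's reduced state (encrypt-then-decrypt with the same $r$ is the identity channel up to the negligible authentication-failure event), but that is precisely the delicate quantum bookkeeping the paper's simpler scheme avoids entirely.
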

\begin{proof}
Let $\{f_k\}$ be a qPRF. For simplicity we assume that each $f_k$ is a map from $\{0, 1\}^n$ to $\{0, 1\}^{2n}$. Recall that for $r \in \{0, 1\}^2n$, $P_r$ denotes the element of the $n$-qubit Pauli group indexed by $r$. Consider the following scheme:
\begin{itemize}
\item $\KeyGen(1^n)$: output $k \inrand \{0, 1\}^n$;
\item $\Enc_k(\rho)$: choose $r \inrand \{0, 1\}^n$; output $\ket{r}\bra{r} \otimes P_{f_k(r)} \rho P_{f_k(r)}^{\dagger}$;
\item $\Dec_k(\ket{r}\bra{r} \otimes \sigma)$: output  $P_{f_k(r)}^\dagger \rho P_{f_k(r)}$\,.
\end{itemize}
In the decryption algorithm, we may assume that the first register is always measured prior to decrypting. Correctness of the scheme is straightforward to check: decrypting with the same key and randomness simply undoes the Pauli operation.

We now sketch the proof that the scheme is IND-CCA1 secure. The key observation is that each query to the encryption oracle is no more useful than receiving a pair $(r, f_k(r))$ for $r \inrand \{0, 1\}^{2n}$, and that each decryption oracle is no more useful than receiving a pair $(r, f_k(r))$ for a string $r$ of the adversary's choice. Thus the adversary learns at most a polynomial number of values of $f_k$. Now, if $f_k$ is a perfectly random function, then these values are completely uncorrelated to the one used to encrypt the challenge. The scheme is thus secure simply by the information-theoretic security of the quantum one-time pad. On the other hand, if $f_k$ is a function in a qPRF, \expref{Definition}{def:quantum-secure-prf} guarantees oracle indistinguishability from perfectly random functions. It follows that, if $(\algo A, \algo A')$ can break the actual scheme, then by computational indistinguishability they would also break the perfect scheme, which is impossible.
\end{proof}

We emphasize that the above proof shows that, even in the case where the adversary chooses the randomness $r$ used by the $\Enc_k$ and $\Dec_k$ oracles, the scheme remains secure. Of course, the randomness for the challenge encryption must still be selected by the challenger. Finally, by combining \expref{Theorem}{thm:qOWF-implies-qPRF} and \expref{Theorem}{thm:IND-CCA1}, we have the following.

\begin{theorem}\label{thm:qOWF-implies-qSKE}
If quantum-secure one-way functions exist, then so do IND-CCA1-secure symmetric-key quantum encryption schemes.
\end{theorem}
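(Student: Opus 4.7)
The plan is to obtain the theorem as an immediate two-step composition of results already established in the excerpt, with no additional machinery required. First I would invoke \expref{Theorem}{thm:qOWF-implies-qPRF}: starting from the hypothesis that a quantum-secure one-way function exists, this yields a quantum-secure pseudorandom function family $\{f_k\}$. Second, I would feed this qPRF into the construction and security argument from the proof of \expref{Theorem}{thm:IND-CCA1}, which exhibits an explicit symmetric-key quantum encryption scheme (a pseudorandomly-keyed quantum one-time pad, with the randomness $r$ transmitted in the clear so that decryption can recompute $P_{f_k(r)}$) and shows that it is IND-CCA1 secure whenever a qPRF is plugged in.

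Because both component theorems are already proved in the preceding subsections, no new technical ingredient is needed — the statement is really just a transitive assertion about the existence chain $\text{qOWF} \Rightarrow \text{qPRF} \Rightarrow \text{IND-CCA1 quantum SKE}$. The only thing worth spelling out in the write-up is that efficiency is preserved along the chain: the qPRF produced by the HILL/GGM-style construction behind \expref{Theorem}{thm:qOWF-implies-qPRF} remains PT-computable, so the resulting \KeyGen, \Enc, and \Dec inherit QPT complexity, which is the condition required by \expref{Definition}{def:encryption-scheme}.

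If there is any subtlety at all it is a bookkeeping one: the scheme in the proof of \expref{Theorem}{thm:IND-CCA1} assumes that $f_k$ maps $n$-bit strings to $2n$-bit strings (so that $f_k(r)$ indexes an element of the $n$-qubit Pauli group), whereas the qPRF delivered by \expref{Theorem}{thm:qOWF-implies-qPRF} comes with some generic output length. I would therefore note that output length can be adjusted freely — either by truncation or by concatenating a small constant number of independent qPRF evaluations on related inputs — without affecting pseudorandomness against QPT distinguishers. With that observation in hand, the composition goes through verbatim, and the theorem follows. I do not anticipate a genuine obstacle; the whole point of stating this as a separate theorem is to package the two-step reduction as a single usable black-box statement for the applications in later sections.
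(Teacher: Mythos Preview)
Your proposal is correct and matches the paper's own proof exactly: the paper simply states that the theorem follows by combining \expref{Theorem}{thm:qOWF-implies-qPRF} and \expref{Theorem}{thm:IND-CCA1}. Your additional remarks on efficiency and output-length adjustment are reasonable elaborations, but the paper itself omits them and treats the result as an immediate corollary.
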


\section{Quantum black-box obfuscation}\label{sec:black-box}

In this section, we discuss the virtual black-box framework for obfuscating quantum computations. We begin in \expref{Section}{sec:vbb-definitions} with a definition of black-box quantum obfuscator, motivated both by the classical analogue and an intuitive notion of what a ``good obfuscator'' should achieve. In \expref{Section}{sec:vbb-applications}, we outline several interesting cryptographic consequences that would follow from the existence of such an obfuscator. Finally, in \expref{Section}{vbb:impossibility}, we prove a few impossibility results which restrict the range of possibilities for the existence of black-box quantum obfuscators. Interestingly, our results leave open some possibilities, which include (restricted versions) of the most interesting applications. Indeed, it is conceivable that quantum obfuscation could be significantly more powerful than its classical counterpart.

\subsection{Definitions}\label{sec:vbb-definitions}

Any reasonable notion of obfuscation involves giving the obfuscated circuit $\algo O(C)$ to an untrusted party. We accept as fundamental the idea that this obfuscated circuit should implement some particular, chosen functionality $f_C$, and that the object $\algo O(C)$ allows the untrusted party to execute that functionality. In the black-box formulation of obfuscation, we demand that this is effectively all that the untrusted party will ever be able to do. The rigorous formulation uses the simulation paradigm: anything which can be efficiently learned from the obfuscated circuit, should also be efficiently learnable simply by evaluating $f_C$ some polynomial number of times. This ``virtual black-box'' notion was first formulated by Barak et al.~\cite{BGIRSVY12}, and proved impossible to satisfy generically in the classical case.

In the quantum case, there are several complications. First, we are considering the obfuscation of quantum functionalities. This implies that the end user (and hence also any adversary) should be in possession of a quantum computer, and likewise for the simulator. Second, it is conceivable that the obfuscation may not just be another quantum circuit, which is simply a classical state describing a quantum computation. The obfuscator might instead output a quantum state, which is then to be employed by the end user to execute the desired functionality in some well-specified manner. These considerations motivate the following definition.
\begin{definition}\label{def:vbb-obfuscator}
A \textbf{black-box quantum obfuscator} is a quantum algorithm $\algo O$ and a QPT $\algo J$ such that whenever $C$ is an $n$-qubit quantum circuit, the output of $\algo O$ is an $m$-qubit state $\algo O(C)$ satisfying
\begin{enumerate}
\item (polynomial expansion) $m = \text{poly}(n)$;
\item (functional equivalence) $\bigl\| \algo J ( \algo O(C) \otimes \rho ) - U_C \rho U_C^\dagger \bigr\|_\emph{tr} \leq \negl(n)$ for all $\rho \in \states(\mathcal H_n);$
\item (virtual black-box) for every QPT $\mathcal A$ there exists a QPT $\mathcal S^{U_C}$ such that
$$
\Bigl| \emph{Pr}\bigl[\mathcal A(\mathcal O(C)) = 1\bigr] - \emph{Pr}\bigl[\mathcal S^{U_C}\bigl(\ket{0^n}\bigr) = 1\bigr] \Bigr| \leq \negl(n)\,.
$$
\end{enumerate}
\end{definition}
We emphasize that while the ``interpreter'' algorithm $\algo J$ must be polynomial-time, the obfuscator itself need not be. In applications, it will be necessary to make the obfuscator polynomial-time; on the other hand, our impossibility results will hold even for inefficient obfuscators. One could consider variants of \expref{Definition}{def:vbb-obfuscator} where the interpreter algorithm is fixed once and for all, or where $\algo O(C)$ itself consists of both a quantum ``advice state'' and a circuit which the end user should execute on the advice state and the desired input. It is straightforward to show that all of these variants are equivalent, in the sense that a black-box quantum obfuscator of each variant exists if and only if the other variants exist. Since we are primarily concerned with possibility vs impossibility, we will stick with the formulation in \expref{Definition}{def:vbb-obfuscator}. We also remark that the interpreter is a natural addition to the classical black-box definition when passing to the quantum case. In order for the definition to make sense, there should be \emph{some efficient way} to use $\algo O(C)$ to implement $U_C$; whatever that efficient procedure is, we have here called it an interpreter and denoted it by $\algo J$.

We also point out that the no-cloning theorem opens up the possibility of \emph{computationally unbounded adversaries.} In the classical case, such an adversary could simply execute the circuit on every input, and thus learn far more than is possible for a polynomial-time black-box simulator. Quantumly, however, a computationally unbounded adversary is restricted both by the no-cloning theorem and the limitations of measurement. The adversary may not be able to acquire multiple copies of the obfuscated state, and the single state may be partially (or completely) destroyed when measured. It is thus not \emph{a priori} clear that an unbounded adversary could always outmatch a polynomial-time black-box simulator. The appropriate definition is a straightforward modification of \expref{Definition}{def:vbb-obfuscator}, where we replace the third condition with the following:

\begin{enumerate}
\setcounter{enumi}{2}
\item \emph{(information-theoretic virtual black-box) for every quantum adversary $\mathcal A$ there exists a QPT $\mathcal S^{U_C}$ such that}
$$
\Bigl| \prob \bigl[\mathcal A(\mathcal O(C)) = 1\bigr] - \prob \bigl[\mathcal S^{U_C}\bigl(\ket{0^n}\bigr) = 1\bigr] \Bigr| \leq \negl(n)\,.
$$
\end{enumerate}

\subsection{Applications of efficient black-box obfuscators}\label{sec:vbb-applications}

In this section, we motivate the study of quantum black-box obfuscation by giving a few example applications. Unsurprisingly, these applications require that the obfuscation algorithm is itself quantum polynomial-time; strictly speaking, this is not required of \expref{Definition}{def:vbb-obfuscator}. Many of these applications are motivated by known classical applications of classical black-box obfuscators. Although our impossibility results will put some restrictions on these applications, they remain interesting. In fact, some of the applications (such as quantum-secure one-way functions) will be used in the impossibility proofs themselves. We point out that, while most of the applications below are written in terms of quantum functionality (e.g., encryption of quantum states), one can just as well consider the weaker case of classical functionality, in this case achieved via quantum means (e.g., via a quantum algorithm for obfuscation.)

\subsubsection{Quantum-secure one-way functions}

The first application shows that, if there exists a classical algorithm for obfuscating quantum computations, then quantum-secure one-way functions exist. By the results discussed in \expref{Section}{sec:encryption}, this also implies the existence of quantum-secure pseudorandom generators, quantum-secure pseudorandom functions, and IND-CCA1-secure symmetric-key quantum encryption schemes.

\begin{prop}
If there exists a classical probabilistic algorithm which is a quantum black-box obfuscator, then quantum-secure one-way functions exist.
\end{prop}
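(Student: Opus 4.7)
The plan is to construct a candidate one-way function directly from $\algo O$, using a family of point-function circuits whose ``hidden value'' will play the role of the one-way function's input. Concretely, for each $n$ and each $s \in \{0,1\}^n$, let $C_s$ be the reversible $n$-qubit quantum circuit that computes the classical point function $\chi_s(x) = [x = s]$, writing the answer into a dedicated output qubit. The circuit $C_s$ is polynomial-time computable from $s$, so if $r$ ranges over the coin tape of $\algo O$ (of appropriate polynomial length) then
$$
F : (s, r) \longmapsto \algo O(C_s; r)
$$
is a PT-computable classical function. I claim $F$ is a qOWF.

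Suppose toward contradiction that a QPT $\algo A_{\text{inv}}$ inverts $F$ with non-negligible probability $\varepsilon(n)$ over uniform $(s, r)$. A valid preimage $(s', r')$ output by $\algo A_{\text{inv}}$ must satisfy $\algo O(C_{s'}; r') = \algo O(C_s; r)$ as classical strings; call this common string $y$. The functional-equivalence clause of \expref{Definition}{def:vbb-obfuscator}, applied to $y$ viewed twice (once as an obfuscation of $C_s$ and once of $C_{s'}$) on the input $\ket{s}\bra{s}$, yields
$$
\bigl\| \algo J(y \otimes \ket{s}\bra{s}) - U_{C_s}\ket{s}\bra{s}U_{C_s}^\dagger \bigr\|_{\text{tr}} \leq \negl(n)
\ \text{and}\
\bigl\| \algo J(y \otimes \ket{s}\bra{s}) - U_{C_{s'}}\ket{s}\bra{s}U_{C_{s'}}^\dagger \bigr\|_{\text{tr}} \leq \negl(n)\,.
$$
For $s \neq s'$ the two right-hand states are orthogonal, since $C_s$ flips the output qubit on input $\ket{s}$ while $C_{s'}$ does not. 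Hence $s = s'$ whenever inversion succeeds and $n$ is sufficiently large.

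Now convert $\algo A_{\text{inv}}$ into a QPT $\algo A$ that, on input $y$, runs $\algo A_{\text{inv}}(y)$ to obtain $(s', r')$, classically verifies $\algo O(C_{s'}; r') = y$ by rerunning $\algo O$, and outputs the first bit of $s'$ on success or a uniform random bit on failure. For uniform $s$, the previous paragraph gives $\prob[\algo A(\algo O(C_s)) = s_1] \geq 1/2 + \varepsilon(n)/2$. The virtual-black-box clause produces, for each $s$, a QPT $\algo S^{U_{C_s}}$ with $|\prob[\algo A(\algo O(C_s)) = 1] - \prob[\algo S^{U_{C_s}}(\ket{0^n}) = 1]| \leq \negl(n)$; since the code of $\algo S$ depends only on $\algo A$ and treats $C_s$ as an oracle, the same $\algo S$ works uniformly in $s$, and averaging yields
$$
\prob_{s \inrand \{0,1\}^n}\bigl[\algo S^{U_{C_s}}(\ket{0^n}) = s_1\bigr] \geq \tfrac{1}{2} + \tfrac{\varepsilon(n)}{2} - \negl(n)\,.
$$
This contradicts the Bennett--Bernstein--Brassard--Vazirani lower bound for unstructured quantum search, which rules out any $\poly(n)$-query algorithm predicting a single bit of a uniformly random $s$ from oracle access to $U_{C_s}$ with non-negligible advantage over $1/2$. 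Hence $F$ is one-way.

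The main obstacle is the second step, where we must argue that any valid preimage of $\algo O(C_s; r)$ under $F$ recovers $s$ itself, despite the obfuscator not being required to be injective in $s$. The only available lever is functional equivalence, and the argument must be routed through the interpreter $\algo J$ to force the two candidate unitaries to agree on a distinguishing input. A secondary, easily handled point is the per-circuit quantification in the virtual-black-box definition: one must verify that a single simulator suffices for random $s$, which follows because $\algo S$ is defined by $\algo A$ alone and only queries $C_s$ as a black box.
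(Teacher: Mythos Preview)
Your proof is correct and follows the same template as the paper's: obfuscate a point-function family with the coins fixed to define the candidate one-way function, then combine the virtual-black-box clause with a BBBV/Grover lower bound to rule out predicting a hidden bit from the obfuscation.

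The paper's version differs only in packaging. It uses the circuits $U_{a,b}$ with $a \in \{0,1\}^n$ and an \emph{extra} bit $b \in \{0,1\}$ (so that $U_{a,0}$ is the identity and $U_{a,1}$ is exactly your $C_a$), takes $b$ itself as the hard predicate, and simply asserts that ``inverting $f$ requires computing $b$'' without spelling out the functional-equivalence step you make explicit. Your choice of $s_1$ as the predicate is equally valid but needs the (standard) BBBV consequence that no single bit of a uniformly random marked element can be guessed with non-negligible advantage; the paper's $b$ reduces more directly to distinguishing the identity oracle from a random point function.

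One shared subtlety worth flagging: both arguments invoke functional equivalence at a \emph{specific} coin string---your $r$ on the challenge side and the adversarially chosen $r'$ on the preimage side---whereas \expref{Definition}{def:vbb-obfuscator} literally states it only for the averaged state $\mathcal O(C)$. The paper handles this by a footnote assuming each $\mathcal O_r(C)$ is itself a circuit for $U_C$; your argument implicitly makes the same assumption, so you are no less rigorous than the original.
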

\begin{proof}
The proof is essentially the same as that of Lemma 3.8 in \cite{BGIRSVY12}. For all $a \in \{0, 1\}^n$ and $b \in \{0, 1\}$, we define 
$$
U_{a, b} : \ket{x,\, y} \longmapsto
\begin{cases}
\ket{a,\, y \oplus b} &\text{ if } x = a;\\
\ket{x,\, y} &\text{ otherwise}.
\end{cases}
$$
Define a function $f : \{0, 1\}^* \rightarrow \{0, 1\}^*$ by $f(a, b, r) = \algo O_r(U_{a, b})$ where $\algo O$ is the obfuscator\footnote{For simplicity of notation, we omit $\algo J$ and assume that $f(a, b, r) = \algo O_r(U_{a, b})$ is in fact a classical circuit for $U_{a, b}$.} as in the hypothesis, and $\algo O_r$ denotes the same algorithm, but with randomness coins initialized to $r$. Clearly, inverting $f$ requires computing $b$ from $\algo O_r(U_{a, b})$. Moreover, with only black-box access to $U_{a, b}$ (for uniformly random $a, b$) the probability of correctly outputing $b$ in polynomial time is at most $1/2 + \negl(n)$.  By the black-box property of $\mathcal O$, we then have
\begin{align*}
\prob_{a, b} [ A(f(a, b, r)) = b] 
&= \prob_{a, b} [ A(\mathcal O_r(a, b)) = b ]\\
&\leq \prob_{a, b} \left[ S^{U_{a, b}}(1^n) = b\right] + \negl(n)\\
&\leq \frac{1}{2} + \negl(n)\,,
\end{align*}
which completes the proof.
\end{proof}

We remark that the above proof fails if the obfuscator is a quantum algorithm---even if its output is itself classical. The issue is that one-way functions must be deterministic; while one can turn a classical probabilistic algorithm into a deterministic one by making the coins part of the input, this is not possible quantumly. We leave the problem of constructing cryptographically useful primitives from a fully quantum obfuscator (or even just from a quantum encryption scheme) as an interesting open question. 

\subsubsection{CPA-secure private-key quantum encryption}

Can we say anything about encryption of data if we know that \emph{quantum} algorithms for quantum black-box obfuscation exist? While we do not know how to extract one-way functions, we can nonetheless produce useful encryption schemes, as follows.

\begin{prop}\label{prop:PKE-from-SKE}
If quantum black-box obfuscators exist, then so do IND-CPA-secure symmetric-key quantum encryption schemes.
\end{prop}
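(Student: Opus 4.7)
The plan is to realize a reusable IND-CPA symmetric-key quantum encryption scheme by pairing the information-theoretic quantum one-time pad with a virtual-black-box obfuscation of a key-activated ``capsule.'' The classical key is a random string $k$, and each ciphertext bundles the Pauli-encrypted plaintext with an obfuscated point-function that releases the Pauli key only to the holder of $k$. Concretely, set $\KeyGen(1^n) = k \inrand \{0,1\}^n$; for each $k \in \{0,1\}^n$ and $r \in \{0,1\}^{2m}$, let $D_{k,r}$ denote the classical circuit that outputs $r$ on input $k$ and $0^{2m}$ otherwise. Encryption $\Enc_k(\rho)$ samples $r \inrand \{0,1\}^{2m}$, computes $\tilde D \gets \algo O(D_{k,r})$, and outputs $(\tilde D,\, P_r \rho P_r^\dagger)$; decryption applies $\algo J$ to $\tilde D \otimes \ket{k}\bra{k}$ to recover $r$ and then undoes the Pauli. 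Correctness is immediate from the functional-equivalence clause of \expref{Definition}{def:vbb-obfuscator}.

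For IND-CPA, my plan is a hybrid argument across the $q = \poly(n)$ encryption-oracle queries together with the challenge query. In hybrid $H_i$, the first $i$ ciphertexts (counting the challenge as the first) are replaced by $(\algo O(D_\bot),\, \one/2^m)$, where $D_\bot$ is a fixed public circuit that always outputs $0^{2m}$; the remaining queries are answered honestly. In the final hybrid $H_q$ the adversary's view is independent of both $k$ and the plaintexts, so its distinguishing advantage is zero. Each step $H_{i-1} \to H_i$ is analyzed by invoking the virtual-black-box property on the $i$-th obfuscation: a QPT distinguisher induces a QPT simulator with oracle access to either $D_{k,r_i}$ or $D_\bot$. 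Since these two oracles differ only on input $x = k$ and $k$ is uniform over $\{0,1\}^n$, polynomially many simulator queries miss $k$ except with negligible probability, so no QPT can distinguish. Replacing the quantum-one-time-pad component $P_{r_i} \rho_i P_{r_i}^\dagger$ by $\one/2^m$ is perfectly indistinguishable once the corresponding $r_i$ is not revealed.

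The central technical obstacle is that each VBB reduction step, when invoked on the $i$-th obfuscation, must still simulate the remaining honest encryption queries (indices $i+1, \dots, q$), which in a naive reading require access to the key $k$. The plan for handling this is to exploit the per-circuit quantifier order in \expref{Definition}{def:vbb-obfuscator}: for each fixed $(k,r_i)$, the VBB clause furnishes a simulator whose code may depend on the specific circuit $C = D_{k,r_i}$ being obfuscated, and in particular may hardcode $k$. That same simulator can therefore generate all remaining obfuscations $\algo O(D_{k,r_j})$ honestly on its own, while using its black-box oracle only on the one circuit being swapped. Averaging the resulting per-$k$ bound over uniformly random $k$ yields the overall negligible IND-CPA advantage. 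Making this quantifier arrangement rigorous---and verifying that the hybrid simulator's runtime stays polynomial despite hardcoding a fresh $k$ at each step---is the main delicate point of the argument.
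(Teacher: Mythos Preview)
Your construction is essentially the paper's: both pair the quantum one-time pad with an obfuscated key-activated point function (the paper's $U'_{r,k}$ applies $P_r^\dagger$ directly when fed $k$; your $D_{k,r}$ reveals $r$ when fed $k$ and you then undo the Pauli by hand---a cosmetic difference). The paper gives only a one-paragraph sketch and does not carry out a hybrid argument, so your more careful treatment is welcome and you have correctly located the sore spot.

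However, your proposed resolution of that sore spot is internally contradictory. You want the VBB simulator $\mathcal S$ to hardcode $k$ (so that it can honestly produce the remaining ciphertexts $\algo O(D_{k,r_j})$ for $j>i$), and in the very next breath you want to argue that $\mathcal S$'s polynomially many oracle queries ``miss $k$ except with negligible probability.'' These two desiderata are incompatible: once $k$ is wired into $\mathcal S$, nothing prevents $\mathcal S$ from querying its oracle at $k$, which distinguishes $D_{k,r_i}$ from $D_\bot$ with certainty. The search lower bound you invoke applies only to algorithms oblivious to $k$. This bites regardless of which quantifier order you read into \expref{Definition}{def:vbb-obfuscator}: under the per-circuit reading you favor, the simulator for $D_{k,r_i}$ and the simulator for $D_\bot$ are a priori \emph{different} machines, so you cannot even form the comparison $\mathcal S^{D_{k,r_i}}$ versus $\mathcal S^{D_\bot}$; under the universal-simulator reading, $\mathcal S$ still depends on your adversary $\mathcal A'_k$, which already hardcodes $k$, so again $\mathcal S$ may know $k$.

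The paper's sketch sidesteps this by asserting in one stroke that the CPA adversary, together with \emph{all} its encryption-oracle outputs, can be replaced by a simulator with black-box access to ``the unitary $U'_{r,k}$''---implicitly treating the many obfuscations as a single application of VBB. Making that rigorous really calls for either a multi-program VBB clause (all $q{+}1$ obfuscations swapped for oracles simultaneously, so the resulting simulator is genuinely ignorant of $k$) or VBB with auxiliary input; neither is what \expref{Definition}{def:vbb-obfuscator} literally supplies. Your diagnosis of where the difficulty lies is right; the cure you sketch does not work.
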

\begin{proof} (Sketch.)
Let $(\algo O, \algo J)$ be a quantum black-box obfuscator. We consider an adaptation of the unitary operator $U_{a, b}$ defined above, but now with Pauli group action instead of XOR, and with two $n$-bit registers:
$$
U'_{r, k} : \ket{x,\, y} \longmapsto
\begin{cases}
\ket{x,\, P_r^\dagger y} &\text{ if } x = k;\\
\ket{x,\, y} &\text{ otherwise},
\end{cases}
$$
Now consider the following scheme for encrypting $n$-qubit quantum states.
\begin{itemize}
\item $\KeyGen(1^n)$: output $k \inrand \{0, 1\}^n$;
\item $\Enc_{k}(\rho)$: choose $r \inrand \{0, 1\}^n$; output $P_r \rho P_r^\dagger \otimes \algo O(U_{r, k})$;
\item $\Dec_{k}(\sigma \otimes \tau)$: output  the second register of $\algo J(\tau \otimes \ket{k}\bra{k} \otimes \sigma)$.
\end{itemize}
To check correctness, we apply the functionality-preserving property of the obfuscator. A decryption of a valid encryption with the same key yields
\begin{align*}
\Dec_k (\Enc_k (\rho))  
&= \tr_1 \left[\algo J(\algo O(U_{r, k}) \otimes \ket{k}\bra{k} \otimes P_r \rho P_r^\dagger)\right] \\
&= \tr_1 \left[U_{r, k} (\ket{k}\bra{k} \otimes P_r \rho P_r^\dagger) U_{r, k}^\dagger\right] \\
&= \tr_1 \left[\ket{k}\bra{k} \otimes \rho \right] \\
& = \rho\,.
\end{align*}
as desired. IND-CPA security follows from the black-box property of the obfuscator, as follows. Let $\algo A$ be an adversary with access to the encryption oracle. Since the output of the encryption is a product state, $\algo A$ can be simulated by an adversary $\algo S$ that has only the first register of the ciphertext (i.e., $P_r \rho P_r^\dagger$) and black-box access to the unitary $U'_{r, k}$. It's then clear that $\algo S$ can only succeed in the challenge stage of \expref{Definition}{def:IND} by discovering the secret input for $U'_{r, k}$ or by guessing the response to the challenge. In any case, $\algo S$ (and hence also $\algo A$) succeeds with probability at most $1/2 + \negl(n)$.
\end{proof}

\subsubsection{Public-key encryption from private-key encryption}

As we now show, combining black-box obfuscation with one-way functions yields even stronger encryption functionality.

\begin{prop}\label{prop:PKE}
If quantum black-box obfuscators and quantum-secure one-way functions exist, then so do IND-CPA-secure public-key quantum encryption schemes.
\end{prop}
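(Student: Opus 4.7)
The plan is to obfuscate the encryption procedure of an IND-CCA1-secure symmetric-key scheme. By \expref{Theorem}{thm:qOWF-implies-qSKE}, qOWFs yield such a scheme $(\KeyGen,\Enc,\Dec)$. I would define $\KeyGen'(1^n)$ to run $k\leftarrow\KeyGen(1^n)$ and output secret key $sk=k$ and public key $pk=\mathcal{O}(\Enc_k)^{\otimes t(n)}$ for a suitable polynomial $t$; the sender encrypts $\rho$ by running the interpreter $\mathcal J$ on one fresh copy of the obfuscated circuit together with $\rho$, and $\Dec'_{sk}$ simply invokes $\Dec_k$. Correctness is then immediate by chaining the functional-equivalence guarantee of $\mathcal O$ with the correctness of the underlying symmetric-key scheme.

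For IND-CPA security, I would reduce to IND-CPA security of the symmetric-key scheme (which is implied by IND-CCA1). Given any QPT adversary $(\mathcal A,\mathcal A')$ attacking the public-key scheme, bundle the entire security game --- both adversary stages with the side register, the challenger's random bit, the possible substitution with $\ket{0^m}\bra{0^m}$, and the final distinguishing measurement --- into a single QPT $\mathcal B$ that takes $pk$ as input and outputs the ``adversary-wins'' bit. Applying the virtual black-box property to $\mathcal B$ produces a QPT $\mathcal S^{\Enc_k}$ whose output bit agrees with $\mathcal B$'s up to negligible error. Since $\mathcal S$ uses only oracle access to $\Enc_k$, it is precisely an IND-CPA adversary against the symmetric-key scheme; IND-CPA security of that scheme then forces $(\mathcal A,\mathcal A')$'s advantage to be negligible.

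The main obstacle is accommodating many uses of the public key: the functional-equivalence clause in \expref{Definition}{def:vbb-obfuscator} is stated for a single invocation of $\mathcal J$, so each obfuscated copy may well be destroyed after one encryption. I would resolve this by letting the key holder --- who has $\Enc_k$ in hand --- simply run the efficient obfuscator $t(n)$ times, where $t(n)$ upper-bounds the number of encryption queries the IND-CPA adversary can make (necessarily polynomial). A secondary subtlety is that the black-box clause in \expref{Definition}{def:vbb-obfuscator} is phrased for an adversary receiving a single obfuscation $\mathcal O(C)$, whereas $\mathcal B$ receives $t$ copies; to justify the reduction cleanly I would either extend the condition to $t$-fold inputs (absorbing the extra copies into $\mathcal B$ and arguing by a hybrid/induction over the copies, with one simulator call per hybrid step), or preemptively obfuscate a single ``multi-session'' encryption circuit accepting $t$ plaintexts in parallel, so that \expref{Definition}{def:vbb-obfuscator} applies verbatim.
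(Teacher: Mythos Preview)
Your overall approach---obfuscate the encryption circuit of the symmetric-key scheme obtained from \expref{Theorem}{thm:qOWF-implies-qSKE}, and reduce public-key security to security of the underlying scheme via the virtual black-box property---is the same as the paper's. There is, however, one subtle but genuine gap in your security argument that the paper explicitly addresses.

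You write that the simulator $\mathcal S^{\Enc_k}$ ``uses only oracle access to $\Enc_k$, [so] it is precisely an IND-CPA adversary against the symmetric-key scheme.'' This is not quite right. The black-box condition of \expref{Definition}{def:vbb-obfuscator} gives $\mathcal S$ oracle access to the \emph{unitary} $U_{\Enc_{(k)}}$ implementing the encryption circuit. For the concrete scheme of \expref{Theorem}{thm:IND-CCA1}, that circuit takes the randomness $r$ as an explicit input register, so oracle access to it allows $\mathcal S$ to evaluate the pair $(r,f_k(r))$ on inputs $r$ of its choosing---strictly more power than the CPA oracle of \expref{Definition}{def:IND}, which selects $r$ internally. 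The paper's proof therefore does \emph{not} reduce to IND-CPA of the private-key scheme; instead it observes that this extra power can be simulated using a decryption oracle (each $\Dec_k$ query also amounts to learning $f_k(r)$ on a chosen $r$), and hence reduces to IND-CCA1. Since you already invoke the CCA1 scheme from \expref{Theorem}{thm:qOWF-implies-qSKE}, the fix is local, but as written your claim that $\mathcal S$ is ``precisely'' a CPA adversary is incorrect.

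On the multi-copy issue you raise: the paper's sketch is silent on this point (it writes $pk=\mathcal O(\Enc_{(sk)})$ and simply applies $pk$ to plaintexts), so your discussion is actually more careful here. Your proposed workarounds are reasonable, though note that a hybrid over copies does not obviously compose with the single-circuit black-box clause without further argument; the ``multi-session circuit'' route, or simply assuming reusability as in \expref{Definition}{def:vbb-reusable}, is cleaner.
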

\begin{proof} (Sketch.)
Under the hypothesis, \expref{Theorem}{thm:qOWF-implies-qSKE} implies the existence of IND-CCA1-secure symmetric-key encryption schemes for quantum states. Let $(\KeyGen, \Enc, \Dec)$ be such a scheme; for concreteness, we may take the scheme described in \expref{Theorem}{thm:IND-CCA1}. For $x \in \{0, 1\}^n$, let $\Enc_{(x)}$ denote the encryption circuit for key $x$; this is the circuit that accepts two input registers (one for randomness, and one for the plaintext) and outputs the ciphertext. Now define a public-key encryption scheme $(\KeyGen', \Enc', \Dec')$ as follows.
\begin{itemize}
\item $\KeyGen'(1^n)$: output $sk : = k \inrand \{0, 1\}^n$ (secret key) and $pk := \algo O\left(\Enc_{(sk)}\right)$ (public key);
\item $\Enc'_{pk}(\rho)$: choose $r \inrand \{0, 1\}^n$; output $pk( \ket{r}\bra{r} \otimes \rho)$;
\item $\Dec'_{sk}(\sigma)$: output  $\Dec_{sk}(\sigma)$\,.
\end{itemize}
The correctness of this scheme follows directly from the functionality-preserving property of $\algo O$ and the correctness of the private-key scheme. To prove IND-CPA security for the public-key scheme, we rely on the black-box property. It implies that any QPT adversary $\algo A$ with access to the public key can be simulated by a QPT $\algo S$ having only black-box access to $\Enc_{(sk)}$. The QPT $\algo S$, in turn, can be simulated by a QPT $\algo S'$ which has both decryption and encryption oracles for the private-key scheme $(\KeyGen, \Enc, \Dec)$. It may not be immediately obvious that the decryption oracle is necessary; this is the case because black-box access to $\Enc_{(sk)}$ enables $\algo S$ to select the randomness used for encryption, thus gaining the ability to evaluate pairs $(r, f_{sk}(r))$ where $f$ is the qPRF from the private-key scheme. 

Now we have that, if $\algo A$ can distinguish ciphertexts during the challenge, then so can $\algo S'$; since the ciphertexts themselves are the same for the public-key scheme and the private-key scheme, this contradicts the IND-CCA1 security of the private-key scheme.
\end{proof}

A few remarks are in order. First, in~\cite{ABFGSS16} it is shown that IND-CPA-secure public-key quantum encryption schemes exist under the assumption that quantum-secure trapdoor permutations exist. This is a stronger assumption than one-way functions. \expref{Proposition}{prop:PKE} can then be thought of as replacing this strengthening of assumptions with an obfuscator. In~\cite{BJ15} it is shown how to use quantum-secure classical public-key encryption to produce quantum public-key encryption (by encrypting the key for the quantum one-time pad); this amounts to the same assumption on primitives as in~\cite{ABFGSS16}. An important difference between~\cite{ABFGSS16, BJ15} and \expref{Proposition}{prop:PKE} is that the scheme from \expref{Proposition}{prop:PKE} may have public keys which are quantum states. Such schemes have not been considered before, and (due to no-cloning) would have significantly different features from their classical counterparts.

An interesting question is if there could be public-key encryption for classical data with classical ciphertexts, but where the encryption procedure is performed by a quantum algorithm. While this question remains open, our impossibility results will show that this cannot be achieved in a generic way via \expref{Proposition}{prop:PKE}.

\subsubsection{Quantum fully homomorphic encryption}

We briefly recall the idea of fully homomorphic encryption (FHE). For thorough definitions and the appropriate notions of security in the fully quantum case, see~\cite{BJ15}. Without considering all of the details, we will view QFHE as an encryption scheme (just as in \expref{Definition}{def:encryption-scheme}), but where $\KeyGen$ produces an extra ``evaluation'' key $k_{\opn{eval}}$, and there is an ``evaluation'' algorithm:
\begin{itemize}
\item $\Eval_{k_{\opn{eval}}} : \states (\mathcal H_m \otimes \mathcal H_g) \longrightarrow \states (\mathcal H_m)$.
\end{itemize}
We imagine a party (henceforth, \emph{server}) in possession of $k_{\opn{eval}}$ and a ciphertext $\Enc_k(\rho)$ provided by another party (henceforth, \emph{client}.) The evaluation algorithm then enables the server to produce the ciphertext $\Enc_k(G_k \rho G_k^\dagger)$, where $G$ is a gate of the server's choice. A classical string describing the choice of gate $G$ (and which qubits $k, k+1, \dots$ of $\rho$ it should be applied to) is input into the register $\mathcal H_g$. In general, we may consider the case where $k_{\opn{eval}}$ is itself a quantum state. Depending on the details of the scheme, this key may be partly or fully consumed by $\Eval$; indeed, this is the case in~\cite{BJ15}. Depending on the consumption rate, this might violate the (classically standard) \emph{compactness} requirement for FHE, namely that the amount of communication between the client and the server should scale only with the size of the ciphertext, and not with the size of the computation the server wishes to perform.

\begin{prop}\label{prop:QFHE}
If quantum black-box obfuscators and one-way functions exist, then so do IND-CPA-secure quantum fully homomorphic encryption schemes.
\end{prop}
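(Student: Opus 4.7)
The plan is to extend the IND-CPA-secure public-key scheme $(\KeyGen', \Enc', \Dec')$ from \expref{Proposition}{prop:PKE} by publishing an obfuscated ``universal decrypt-gate-encrypt'' circuit as an evaluation key. Let $(\KeyGen, \Enc, \Dec)$ be the IND-CCA1-secure symmetric-key scheme from \expref{Theorem}{thm:qOWF-implies-qSKE}, with secret key $sk$. Define a circuit $C_{sk}$ which takes a (symmetric-key) ciphertext $c$ together with a classical description of a single gate $G$ (and indices of the target qubits), decrypts $c$ with the hardcoded $sk$, applies $G$, and re-encrypts using $\Enc_{sk}$ with fresh internal randomness. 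The QFHE scheme is then: $\KeyGen''$ runs $\KeyGen'$ to get $(pk, sk)$ and additionally outputs $k_{\opn{eval}} := \algo O(C_{sk})$; $\Enc''$ and $\Dec''$ are inherited from the public-key scheme; and $\Eval_{k_{\opn{eval}}}$ on input $(\sigma, G)$ applies the interpreter $\algo J$ to $k_{\opn{eval}} \otimes \sigma \otimes \ket{G}\bra{G}$ and returns the resulting ciphertext, iterating gate-by-gate for longer computations.

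Correctness is immediate from the functional equivalence clause of \expref{Definition}{def:vbb-obfuscator} together with correctness of $(\Enc, \Dec)$: each $\Eval$-step produces, up to negligible trace-distance error, a fresh encryption of $G \rho G^\dagger$ from an encryption of $\rho$, and iterating over a polynomial-size circuit accumulates only polynomial total error.

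For IND-CPA security I would reduce to IND-CCA1 of the underlying symmetric scheme, in close analogy to the proof of \expref{Proposition}{prop:PKE}. A QFHE adversary receives $pk$, $k_{\opn{eval}}$ and an encryption oracle and attempts to distinguish the challenge. I invoke the black-box property of $\algo O$ via a two-step hybrid. First, treating the rest of the adversary (which still holds $pk$ and the encryption oracle) as a QPT holding the single obfuscation $k_{\opn{eval}}$, the BB condition replaces $k_{\opn{eval}}$ by oracle access to $C_{sk}$ at negligible cost; a single $C_{sk}$-query is implementable as one $\Dec_{sk}$-call followed by one $\Enc_{sk}$-call, so the resulting QPT fits exactly into an IND-CCA1 game against $(\KeyGen,\Enc,\Dec)$ with $pk$ still in hand. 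Second, applying the BB condition to $pk$ replaces it by yet another oracle for $\Enc_{(sk)}$, already subsumed by the IND-CCA1 encryption oracle. Since the QFHE ciphertexts are, by the functional equivalence of $pk$, statistically indistinguishable from fresh $\Enc_{sk}$-ciphertexts, any non-negligible IND-CPA advantage of the original adversary would translate into a non-negligible IND-CCA1 advantage, contradicting \expref{Theorem}{thm:qOWF-implies-qSKE}.

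The main obstacle is that \expref{Definition}{def:vbb-obfuscator} is a single-circuit property, whereas the adversary here is handed two distinct obfuscations $pk$ and $k_{\opn{eval}}$; the naive two-circuit strengthening is exactly what \expref{Section}{vbb:impossibility} rules out. The sequential-hybrid argument above circumvents this: at each replacement the ``other'' obfuscation appears merely as an auxiliary quantum state or as an oracle interface, not as a second obfuscation being simulated, so only the one-circuit BB condition is invoked at each step. A secondary caveat concerns compactness: if the interpreter $\algo J$ destroys its $\algo O(C)$-input, then evaluating a size-$s$ circuit consumes $\poly(s)$ copies of $k_{\opn{eval}}$, in which case compactness in the sense of~\cite{BJ15} requires the extra assumption that $k_{\opn{eval}}$ is reusable, which I would state as a caveat (consistent with the discussion preceding the statement) rather than attempt to establish in general.
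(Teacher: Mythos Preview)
Your construction and the paper's are essentially the same idea---publish an obfuscation of a ``decrypt, apply gate, re-encrypt'' circuit as the evaluation key---and both treat correctness and the compactness caveat in the same way. The route to security, however, differs. The paper works \emph{modularly}: it takes an abstract IND-CPA public-key scheme $(\KeyGen,\Enc,\Dec)$ (whose existence follows from \expref{Proposition}{prop:PKE}) and sets $k_{\opn{eval}}=\algo O(\Enc_{pk}\circ U_\mu\circ \Dec_{sk})$, re-encrypting under the \emph{public} key. Security is then argued by a single invocation of the black-box property on $k_{\opn{eval}}$, after which black-box access to the decrypt--gate--encrypt functionality is no more useful than being handed fresh public-key encryptions of $G\rho G^\dagger$---something the IND-CPA adversary can already produce with its own CPA oracle. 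The reduction thus lands directly on IND-CPA of the abstract PKE, with the two-obfuscation issue (for $pk$) already absorbed into the proof of \expref{Proposition}{prop:PKE}.

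Your direct route instead keeps the symmetric scheme in view and carries two obfuscations ($pk$ and $k_{\opn{eval}}$) through the QFHE security proof, reducing all the way down to IND-CCA1 of the SKE via a two-step hybrid. This is morally the same argument unwound, but your own ``main obstacle'' paragraph is sharper than you give it credit for: when you apply the single-circuit BB condition to $k_{\opn{eval}}$ while the adversary holds $pk=\algo O(\Enc_{(sk)})$ as ``auxiliary state,'' that auxiliary state is correlated with the very circuit $C_{sk}$ being simulated (both depend on $sk$), so the adversary is not a fixed QPT in the sense of \expref{Definition}{def:vbb-obfuscator}. Calling $pk$ ``merely an auxiliary quantum state'' is precisely the dependent-auxiliary-input setting, which the one-circuit definition does not cover. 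The paper's modular presentation sidesteps this at the QFHE level by treating the PKE as a primitive whose security is already established; the residual informality is the same, but it is localized in \expref{Proposition}{prop:PKE} rather than reappearing here.
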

\begin{proof} (Sketch.)
We will consider the public-key case, which turns out to be simpler. Let $(\algo O, \algo J)$ be a quantum obfuscator, and $(\KeyGen, \Enc, \Dec)$ an IND-CPA-secure public-key scheme. We adapt $\KeyGen$ to produce an evaluation key, and describe the evaluation algorithm. We will require a universal circuit $U_\mu$ for performing gates on $m$-qubit states; this circuit accepts two inputs: an $m$-qubit state, and a description of a gate and indices of the qubits to which the gate should be applied. In our usage, $m$ will be the number of qubits of the ciphertext state.
\begin{itemize}
\item $\KeyGen'(1^n)$: output $\KeyGen(1^n) = (sk, pk)$ and $k_{\opn{eval}} = \mathcal O(\Enc_{pk} \circ U_\mu \circ \Dec_{sk})$;
\item $\Eval_{k_{\opn{eval}}} : \rho \otimes \ket{G}\bra{G} \longmapsto \algo J (k_{\opn{eval}} \otimes \rho \otimes \ket{G}\bra{G})$.
\end{itemize}
where $\ket{G}\bra{G}$ is again just a classical string instructing $U_\mu$ to apply the desired gate. A circuit for $\Enc_{pk} \circ U_\mu \circ \Dec_{sk}$ is given below; the gate register is represented by the bottom wire.\\ 
$$
\Qcircuit @C=1em @R=1.0em {
&\gate{\Dec_{sk}} 	&\multigate{1}{~~U_\mu~~} 	&\gate{\Enc_{pk}} 	&\qw\\
&\qw 			&\ghost{~~U_\mu~~}		&\qw 			&\qw\\ \\
}
$$
We now want to show that $(\KeyGen', \Enc, \Dec, \Eval)$ is a public-key QFHE scheme. The homomorphic property follows directly from the definition of $\Eval$ and the functionality-preserving property of the obfuscator. The security of the encryption scheme follows from IND-CPA security of $(\KeyGen, \Enc, \Dec)$ and the black-box property of $(\algo O, \algo J)$. The black-box property implies that each execution of the $\Eval$ algorithm is no more useful than providing the server with an encryption of $G\rho G^\dagger$. However, in the IND-CPA setting, the adversary can already use the CPA oracle to produce encryptions of \emph{arbitrary} plaintexts of her choice (as opposed to just ones which are modifications of the plaintext provided by the client.) There is one additional wrinkle: by repeatedly applying gates (or even just the identity), the adversary can also produce multiple encryptions during the challenge round. However, as shown in~\cite{BJ15}, single-message IND-CPA is equivalent to multiple-message IND-CPA. By the assumption that $(\KeyGen, \Enc, \Dec)$ is IND-CPA secure, it follows that the homomorphic scheme is also secure.

We remark that, in general, the encryption procedure $\Enc_{pk}$ may require an external source of randomness. This is certainly the case in classical encryption, but may not be required if the $\Enc$ algorithm is allowed to perform measurements. In any case, since we are starting with an IND-CPA public-key scheme, the adversary already has access to the public key and the ability to encrypt with randomness of her choice; the ability to choose randomness in $\Eval$ is of no additional benefit.
\end{proof}

\subsubsection{Public-key quantum money}

\paragraph{Quantum money.}
The idea of ``quantum money'' first arose in work by Wiesner~\cite{Wie1983}. The core idea is simple: use a quantum state for representing currency in such a way that the no-cloning theorem of quantum mechanics prevents counterfeiting. These ideas were refined and developed further in several works~\cite{Aar09, AC12, BBBW83, FGHLS12, MS10}; some of these works also included explicit proposals based on various hardness assumptions. 

Informally, a \emph{quantum money scheme} consists of two algorithms: \Mint, which produces quantum states, and \Verify, which accepts an input state and then either accepts or rejects. If the different states produced by \Mint are distinguishable, then we refer to them as \emph{bills}; if they are indistinguishable, then we call them \emph{tokens} (if \Verify consumes them) or \emph{coins} (if \Verify does not consume them.) In all quantum money schemes, we imagine an authority (typically called the bank) which runs \Mint repeatedly to produce money; in addition, the \Verify algorithm should accept only on states produced by the bank. Depending on the particular scheme, this might only be true if \Verify is executed by the bank (private-key money), or it might be true for any party (public-key money.)

In this language, Wiesner's original idea~\cite{Wie1983} was for a private-key scheme for bills, which is as follows. Each execution of \Mint produces two random classical bitstrings $r, s \in \{0, 1\}^{2n}$ as well as an $n$-qubit quantum state $\ket{\psi_r}$, with each qubit initialized in one of the states $\ket{0}, \ket{1}, \ket{+}, \ket{-}$, as determined by the bits of $r$. The bank records the pair $(r, s)$ in a secret table, and publishes $(s, \ket{\psi_r})$. The bank verifies by using $s$ to look up the correct $r$ in the table, and then performing the measurements in the correct basis and checking the results against $r$. 

\paragraph{Public-key money from circuit obfuscation.}
While private-key money schemes are relatively straightforward to construct, public-key proposals appear to be much more difficult, and require computational assumptions. In analogy to its role in producing public-key encryption schemes from private-key ones (\expref{Proposition}{prop:PKE-from-SKE}), an obfuscator can sometimes be used to turn private-key money schemes to public-key ones. The use of an obfuscator to create a particular quantum money scheme was considered by Mosca and Stebila~\cite{MS10}. Their scheme (in our language) is as follows. Each execution of \Mint produces a Haar-random $n$-qubit quantum state $\ket{\psi}$, together with the obfuscation $\mathcal O(U_\psi)$ of a circuit\footnote{For most $\ket{\psi}$, the circuit $U_\psi$ will not have polynomial length. However, as pointed out by~\cite{Aar09}, one can instead select $\ket{\psi}$ from an approximate $t$-design without a significant loss in security.} for $U_\psi = \one - 2\ket{\psi}\bra{\psi}$. The bill consists of the pair $(\mathcal O(U_\psi), \ket{\psi})$. $\Verify(\ket{\varphi})$ consists of executing the following:
$$
\Qcircuit @C=1em @R=1.0em {
\lstick{\ket{0}} &\gate{H} 	&\ctrl{1}				&\gate{H} 	&\qw &\meter\\
\lstick{\ket{\varphi}} &\qw 	&\gate{\algo O(U_\psi)}	&\qw 	&\qw &\qw\\ \\
}
$$
and accepting iff the measurement returns $1$. It's easy to check that the above succeeds only on valid states; moreover, in that case, the state $\ket{\psi}$ is output in the second register, so that verification can be repeated. To show resistance of the above scheme to counterfeiting, one can use Aaronson's Complexity-Theoretic No-Cloning Theorem~\cite{Aar09}, which states that cloning the state $\ket{\psi}$ while in possession of oracle access to $\ket{U_\psi}$ requires $\Omega(2^{n/2})$ queries. The first published proof of this theorem (as well as its first appearance in the form required here) was in~\cite{AC12}. 

Unfortunately, we will later show that obfuscation of quantum circuits in the form required by Mosca and Stebila is impossible. What remains possible is a setting in which both $\ket{\psi}$ and $\mathcal O(U_\psi)$ are quantum states, and another circuit (which is publicly known and independent of $\ket{\psi}$) is used for verification. Moreover, as we will also show, any black-box obfuscation scheme which outputs states that can be efficiently cloned is also impossible. We thus conjecture the following.
\begin{conjecture}
If quantum black-box obfuscators exist, then so do public-key quantum money schemes.
\end{conjecture}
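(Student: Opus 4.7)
The plan is to follow the Mosca-Stebila template, using the obfuscator $(\algo O, \algo J)$ in place of an idealized circuit obfuscator, and reducing unforgeability to Aaronson's Complexity-Theoretic No-Cloning Theorem~\cite{AC12}. Concretely, fix an efficiently samplable approximate $t$-design on $n$ qubits with $t = \poly(n)$. Define $\Mint(1^n)$ to sample $\ket{\psi}$ from this ensemble, compute a classical description of $U_\psi = \one_n - 2\ket{\psi}\bra{\psi}$ (of $\poly(n)$ size using the design parameters), and output the bill $b = \algo O(U_\psi) \otimes \ket{\psi}\bra{\psi}$. Define $\Verify$ as the Hadamard test that uses $\algo J$ applied to the first register of $b$ to effect a controlled reflection on the second register, accepting iff the measurement returns $1$. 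Correctness and the ability to re-verify a genuine bill both follow from the functional-equivalence clause of \expref{Definition}{def:vbb-obfuscator}, up to negligible error in the trace norm.

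For unforgeability, suppose toward contradiction that a QPT forger $\algo F$ maps a single bill $\algo O(U_\psi) \otimes \ket{\psi}\bra{\psi}$ to a bipartite state $\sigma_{AB}$ such that both $A$ and $B$ pass $\Verify$ with non-negligible probability. I would invoke a natural auxiliary-input extension of the virtual black-box condition: for every QPT $\algo A$ acting on $\algo O(C) \otimes \rho$, there is a QPT $\algo S^{U_C}$ acting on $\rho$ whose output is computationally indistinguishable from that of $\algo A$. Applying this to the QPT that runs $\algo F$ followed by two independent verifications, with auxiliary input $\ket{\psi}$, yields a QPT $\algo S^{U_\psi}(\ket{\psi})$ that, using only oracle access to the reflection $U_\psi$, outputs a two-register state both of whose registers pass verification (now with $\algo J$ replaced by direct oracle evaluation of $U_\psi$). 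Projecting onto the accepting subspace recovers two approximate copies of $\ket{\psi}$, which directly contradicts the $\Omega(2^{n/2})$-query lower bound of~\cite{Aar09, AC12} applied at approximation order at least the query complexity of $\algo S$.

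The main obstacle is bridging \expref{Definition}{def:vbb-obfuscator}, which is stated for bit-valued adversaries on a single obfuscated input with no quantum auxiliary, to the auxiliary-input, multi-output setting used above. I would formulate the bridge as a separate lemma: hardwire into the adversary the classical randomness used to sample $\ket{\psi}$ from the $t$-design, apply \expref{Definition}{def:vbb-obfuscator} to the resulting fixed-input adversary to obtain a simulator that uses only oracle access to $U_\psi$, and then average the randomness back in. A secondary subtlety is that verification must remain non-destructive across many rounds; this is guaranteed by the Hadamard-test structure on a valid bill but requires that accumulated interpreter error from $\algo J$ stays negligible, and that the $t$-design parameter is large enough to absorb the number of verifications an adversary can witness.
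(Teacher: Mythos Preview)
The statement you are trying to prove is labeled a \emph{Conjecture} in the paper and is not proved there. The surrounding discussion explains why: the impossibility results of \expref{Section}{vbb:impossibility} rule out the circuit-output obfuscation that the original Mosca--Stebila argument uses, and the surviving case---where $\algo O(U_\psi)$ is a single uncloneable quantum state---does not obviously support the reduction to the Complexity-Theoretic No-Cloning Theorem. So there is no proof in the paper to compare your proposal against; the authors deliberately leave this open.

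Your argument has a concrete gap, and it lies exactly in the ``bridge lemma'' you flag as the main obstacle. Hardwiring the $t$-design randomness $r$ into the adversary produces a \emph{different} QPT $\algo A_r$ for each $r$. \expref{Definition}{def:vbb-obfuscator} then gives you, for each such adversary, a simulator $\algo S_r$ that may depend on $\algo A_r$ and hence on $r$. But a simulator that knows $r$ can simply prepare $\ket{\psi_r}$ from scratch, output two copies, and pass both Hadamard tests---with zero oracle queries. ``Averaging the randomness back in'' afterward yields only an $r$-indexed \emph{family} of simulators, not a single oracle algorithm that clones an unknown $\ket{\psi}$, so no contradiction with the Aaronson--Christiano lower bound follows. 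This order-of-quantifiers issue is precisely why auxiliary-input VBB is a genuinely stronger (and in many settings provably unachievable~\cite{GK05}) assumption than \expref{Definition}{def:vbb-obfuscator}; you cannot recover it by the hardwiring trick. A secondary but related issue is that \expref{Definition}{def:vbb-obfuscator} is stated for single-bit predicates, whereas your reduction needs the simulator to reproduce a multi-qubit output state; upgrading bit-indistinguishability to state-indistinguishability is itself nontrivial and does not follow from the definition as written.
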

If the relevant obfuscation is a consumable state, then this would result in a token scheme. If it can be reused to perform verification repeatedly\footnote{For example, if successful verification also outputs another state which is sufficiently close to the original state.}, then the result would be a bills scheme. We remark that, in any case, all of the public-key money states discussed above should be authenticated by the bank; otherwise a merchant would only know that he was handed \emph{some} pair (state, circuit) where the circuit executed on the state outputs ``accept''---a clearly inadequate state of affairs.

\subsection{Impossibility results}\label{vbb:impossibility}

\subsubsection{Impossibility of two-circuit obfuscation}\label{sec-twocircuit}
Barak et. al. \cite{BGIRSVY12} showed that black-box obfuscation is impossible by constructing an explicit circuit family that cannot be black-box obfuscated. 
We begin with a similar result in the quantum setting. We show that quantum black-box obfuscation is impossible in any setting where the adversary can gain access to two outputs of the obfuscator on \emph{different} inputs. We formalize this notion by defining a ``black-box two-circuit obfuscator,''  defined just as in Definition \ref{def:vbb-obfuscator} but with the following strengthening of the virtual black-box condition:
\begin{enumerate}
\setcounter{enumi}{2}
\item \emph{(two-circuit virtual black-box) for every pair of quantum circuits $C_1$ and $C_2$ and every quantum adversary $\mathcal A$ there exists a quantum simulator $\mathcal S^{U_{C_1}, U_{C_2}}$ and a negligible $\epsilon_2$ such that}
$$
\Bigl| \text{Pr}[\mathcal A(\mathcal O(C_1) \otimes \mathcal O(C_2)) = 1] - \text{Pr}\bigl[\mathcal S^{U_{C_1}, U_{C_2}}\bigl(\ket{0}^{\otimes |C_1| + |C_2|}\bigr) = 1\bigr] \Bigr| \leq \epsilon_2(n, \min\{|C_1|,|C_2|\})\,.
$$
\end{enumerate}

We now show that there exists a family of circuits which is unobfuscatable under the above definition. We emphasize that our result holds even when the outputs of the obfuscator are quantum states, and even if these states are \emph{single-use only}, i.e., if the interpreter $\mathcal J$ irrevocably destroys the obfuscated state during use. 

We first define a \emph{circuit-pair family} to be an ensemble of distributions over pairs of circuits. More precisely, if $\mathcal C$ is a circuit-pair family, then there exists a Turing machine M which, on input a positive integer parameter $n$ (in unary), outputs a classical description of a pair of circuits $(C_n, D_n)$ drawn at random from some distribution $\mathcal C_n$ on pairs of poly$(n)$-size circuits. If $M$ is polynomial-time, then we say that $\mathcal C$ is a \emph{poly-time circuit-pair family.}

We also define a \emph{state-pair family} analogously. If $\mathcal C'$ is a state-pair family, then there exists a (not necessarily polynomial-time) quantum algorithm which, on input $n$ in unary, outputs a pair of density operators $(\rho_n, \sigma_n)$ drawn at random from some distribution $\mathcal C_n'$ on quantum states on poly$(n)$-many qubits. Given a circuit-pair family $\mathcal C$ and a state-pair family $\mathcal C'$, we say that $\mathcal C'$ is an obfuscation of $\mathcal C$ if there exists a computable map $\mathcal C \rightarrow \mathcal C'$ assigning to each circuit a corresponding state, in a manner that satisfies the two-circuit obfuscation definition above.

With these definitions, we can now state our first impossibility result.

\begin{theorem}\label{thm:pair-impossibility}
There exists a poly-time quantum circuit-pair family $\mathcal C$ such that no state-pair family is an obfuscation of $\mathcal C$.
\end{theorem}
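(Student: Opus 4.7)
The plan is to construct a quantum version of the Barak et al.\ unobfuscatable two-function family. For each $(\alpha,\beta) \in \{0,1\}^n \times \{0,1\}^n$, let $C_{\alpha,\beta}$ be the reversible quantum circuit on $2n$ qubits implementing $\ket{x}\ket{y} \mapsto \ket{x}\ket{y \oplus \beta \cdot [x=\alpha]}$, and let $D_{\alpha,\beta}$ be a polynomial-size quantum circuit that reads from its input register a classical description of a quantum algorithm $J$ together with an ancillary state $\rho$, runs $J$ on $\rho \otimes \ket{\alpha}\ket{0^n}$, measures the final $n$ qubits, and writes $1$ to its output register iff the measurement outcome equals $\beta$. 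Let $\mathcal{C}_n$ be the uniform distribution on such pairs $(C_{\alpha,\beta}, D_{\alpha,\beta})$; then $\mathcal{C}$ is a poly-time circuit-pair family.

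Suppose for contradiction that $(\mathcal{O}, \mathcal{J})$ is a two-circuit obfuscator; pad $n$ so that $D_{\alpha,\beta}$'s input register accommodates both an obfuscation $\mathcal{O}(C_{\alpha,\beta})$ and a classical description of $\mathcal{J}$ at the relevant input size. Define a QPT adversary $\mathcal{A}$ that, on input $\mathcal{O}(C_1) \otimes \mathcal{O}(C_2)$, applies $\mathcal{J}$ to the second register, with input register holding a classical description of $\mathcal{J}$ paired with the state $\mathcal{O}(C_1)$, and outputs the bit written by $C_2$. On the pair $(C_1, C_2) = (C_{\alpha,\beta}, D_{\alpha,\beta})$, two nested applications of functional equivalence---the outer $\mathcal{J}$ realising $D_{\alpha,\beta}$, the inner $\mathcal{J}$ realising $C_{\alpha,\beta}$ on $\ket{\alpha}\ket{0^n}$---show that the inner computation produces $\ket{\alpha}\ket{\beta}$ up to negligible trace distance, so $D$'s equality test succeeds and $\mathcal{A}$ outputs $1$ with probability $1-\negl(n)$. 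On the other hand, for $(C_1, C_2) = (C_{\alpha,\beta}, Z)$, where $Z$ is a trivial circuit of the same size that always writes $0$, functional equivalence alone gives $\Pr[\mathcal{A} = 1] \leq \negl(n)$.

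By the two-circuit virtual black-box condition, there is a QPT $\mathcal{S}$ matching $\mathcal{A}$'s output probability up to negligible error on both pairs, so $\mathcal{S}$ must distinguish the oracles $U_{D_{\alpha,\beta}}$ and $U_Z$ with non-negligible advantage averaged over uniform $(\alpha,\beta)$. We derive the contradiction by showing that no QPT making $q = \poly(n)$ quantum queries can do this: for any classical input $x$ chosen independently of $\beta$, the internal equality check inside $D_{\alpha,\beta}(x)$ succeeds with probability exactly $2^{-n}$ over uniform $\beta$ (since the measurement outcome is a deterministic function of $(x, \alpha)$), and thus the set on which $U_{D_{\alpha,\beta}}$ and $U_Z$ differ has expected density $2^{-n}$; the one-way-to-hiding lemma then bounds the distinguishing advantage between the two oracles by $O(q \cdot 2^{-n/2})$, while a BBBV-style search lower bound limits $\mathcal{S}$'s ability to query $U_{C_{\alpha,\beta}}$ at $\alpha$ to probability $O(q^2/2^n)$, both negligible. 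The main obstacle is the quantum hiding step for $U_{D_{\alpha,\beta}}$: unlike in the classical case, a quantum query can in principle accumulate information about $\beta$ through interference without ever observing a ``successful'' output, which is why the one-way-to-hiding lemma (rather than a naive union bound) is essential; the small-density condition it requires is precisely what uniformly random $\beta$ guarantees. A secondary technicality is ensuring $D_{\alpha,\beta}$'s input register fits an arbitrary obfuscator's output and interpreter description, which is handled by suitable polynomial padding of $n$ for any specific candidate $(\mathcal{O}, \mathcal{J})$.
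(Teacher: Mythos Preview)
Your approach is correct and reaches the same conclusion, but it diverges from the paper's route in two linked ways.

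First, the paper varies the \emph{first} circuit in the pair---contrasting $(C_{a,b}, D'_{a,b})$ with $(I_{2n}, D'_{a,b})$---while you vary the second, contrasting $(C_{\alpha,\beta}, D_{\alpha,\beta})$ with $(C_{\alpha,\beta}, Z)$. Correspondingly, the paper hardcodes the interpreter $\mathcal J$ into the circuit $D'_{a,b}$, whereas your $D_{\alpha,\beta}$ accepts a description of an interpreter $J$ as part of its input.

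Second, and as a direct consequence, the simulator lower bound differs. In the paper's setup the second oracle $U_{D'_{a,b}}$ is identical in both worlds, so indistinguishability reduces to ``cannot tell $U_{C_{a,b}}$ from identity,'' a single BBBV/Grover bound. In your setup you need two bounds---BBBV on the first oracle (so $\mathcal S$ cannot learn $\alpha$ or $\beta$ from $U_{C_{\alpha,\beta}}$) and then a hiding argument for $U_{D_{\alpha,\beta}}$ versus $U_Z$---and these must be composed with some care, since queries to the two oracles interleave and later queries to the second oracle may depend on $\beta$ through earlier responses from the first. The paper's route is therefore simpler; yours has the modest advantage that the circuit $D_{\alpha,\beta}$ itself is independent of the particular interpreter (only its input length must scale with the obfuscator).

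Two small technical points on your version. Your $D_{\alpha,\beta}$ as written performs a measurement, which leaves $U_{D_{\alpha,\beta}}$ ill-defined as a unitary oracle; you should replace the measurement by a coherent comparison with $\beta$ (and uncompute), so that the black-box oracle is a genuine unitary. Also, the standard one-way-to-hiding lemma is stated for classical-function oracles of the form $\ket{x,z}\mapsto\ket{x,z\oplus f(x)}$, whereas your $U_{D_{\alpha,\beta}}$ acts nontrivially on a quantum input register $\rho$ and is not of that form. The bound you want still follows from the underlying BBBV hybrid argument---the subspace on which $U_{D_{\alpha,\beta}}$ and $U_Z$ differ has relative dimension $2^{-n}$ once one averages over $\beta$, independently of $\alpha$ and of the submitted $J$---but invoking ``O2H'' by name is slightly loose here.
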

\begin{proof}
Let $(\mathcal O, \mathcal J)$ be a black-box quantum two-circuit obfuscator. 
The poly-time quantum circuit-pair family $\mathcal C$ consists of quantum circuits for implementing the following pairs of unitary operators. Each pair is parameterized by an input size $n$, as well as bitstrings $a, b$ chosen uniformly at random from $\{0, 1\}^n$.
\begin{align}\label{eq:pair-impossible}
U_{a, b} &: \ket{x,\, y} \longmapsto
\begin{cases}
\ket{x,\, y \oplus b} &\text{ if } x = a;\\
\ket{x,\, y} &\text{ otherwise}.
\end{cases}\\
V_{a, b} &: \ket{C,\, z} \longmapsto
\begin{cases}
\ket{C,\, z \oplus 1} &\text{ if } C(a) = b;\\
\ket{C,\, z} &\text{ otherwise}.
\end{cases}
\end{align}
The registers indexed by $x$ and $y$ are of size $n$. The register indexed by $C$ accepts a circuit description (under some fixed encoding), and needs to be able to handle inputs of size $|\algo O(C_{a, b})|$ (i.e. of size equal to the number of qubits in the state $\algo O(C_{a, b})$). Here $C_{a, b}$ is a fixed, explicit poly$(n)$-size circuit for $U_{a, b}$. The second register of $V_{a, b}$ has size one. 

Note that both of these unitaries can be implemented by efficient quantum circuits. We choose some particular set of such circuits, and henceforth denote them by $C_{a,b}$ and $D_{a,b}$, respectively. The idea for the proof is as follows. Consider an adversary $\algo A$ which is ignorant of the randomly selected $a$ and $b$, and consider two scenarios: in the first, $\algo A$ is given access to \emph{any} pair of \emph{circuits} that implement $U_{a, b}$ and $V_{a, b}$; in the second, $\algo A$ only has oracle access to $U_{a, b}$ and $V_{a, b}$. The point is that, in the first case, $\algo A$ can execute $V_{a, b}$ on a circuit for $U_{a, b}$; provided that the latter is not too long, $\algo A$ will achieve something that is impossible to do with only black-box access. Specifically, it is only in the first case that $\algo A$ will be able to tell if the first circuit/oracle implements $U_{a, b}$, or if it has surreptitiously been replaced by the identity operator!

Things are somewhat complicated by the fact that the obfuscator outputs states instead of circuits. We will need to enable $\algo A$ to execute these states on one another. It will thus be necessary to replace $D_{a, b}$ with a related circuit $D_{a, b}'$. Roughly speaking, this circuit will check if its input, when interpreted as a quantum advice state to the algorithm $\algo J$, maps the input $a$ to the output $b$. A precise description follows. First, $D_{a, b}'$ will have three registers: an input register of $m$ qubits, a work register of $2n$ qubits, and an output register of $1$ qubit initialized in the $|0\rangle$ state. When given as input a quantum state $\rho$ on $m$ qubits, it will initialize the first $n$ bits of the work register to $|a\rangle$, then execute the appropriate unitary circuit of $\algo J$ on $\rho \otimes \ket{a}$. Finally, if the output register of the latter computation contains $\ket{b}$, $D_{a, b}'$ will flip the contents of the output register. We remark that, by a simple counting argument over circuits, this occurs for only an exponentially small fraction of possible input states $\rho$.

Recall that the $2n$-qubit identity operator is denoted by $\one_{2n}$, and is implemented by the obvious circuit which we will denote by $I_{2n}$. We observe that, for every QPT algorithm $\mathcal S$ there exists a polynomial $t$ and a negligible $\epsilon_1$ so that:

\begin{equation}\label{eqn11}
\Bigl| \text{Pr}\bigl[\mathcal S^{U_{a,b}, D_{a,b}'}\bigl(\ket{0}^{\otimes t(n)}\bigr) = 1\bigr]
- \text{Pr}\bigl[\mathcal S^{Id_{2n}, D_{a,b}'}\bigl(\ket{0}^{\otimes t(n)}\bigr) = 1\bigr] \Bigr| 
\leq \epsilon_1(n)\,.
\end{equation}
Here the probability is taken over the uniformly random choice of $a$ and $b$ as well as all of the measurement outcomes of $\algo S$. The above is an easy corollary of the tightness of the Grover bound for unstructured quantum search \cite{BBBV}. Indeed, given the definitions of $U_{a, b}$ and $D_{a, b}'$, it's clear that with only polynomial queries and no knowledge of $a$ or $b$, $\algo S$ is faced precisely with unstructured search for an exponentially small ``marked space.'' This marked space is only encountered if $\algo S$ correctly guesses $a$, or correctly guesses an obfuscation of a circuit that maps $a$ to $b$.
  
Now consider the QPT algorithm $\mathcal{A}$ that, given as input the obfuscated states $\mathcal{O}(C)$ and $\mathcal{O}(D)$, simply executes the quantum algorithm $\mathcal{J}$ on their tensor product, accepting if and only if the outcome is $1$. Notice that this succeeds with constant probability $\alpha > 0$ if $C$ is functionally equivalent to $C_{a,b}$ and $D$ is functionally equivalent to $D_{a,b}'$. On the other hand, this same algorithm $\algo A$ accepts with at most negligible probability when $C$ is functionally equivalent to $I_{2n}$ (and $D$ is still functionally equivalent to $D_{a, b}'$); indeed, this only happens is if $a = b$. Thus there exists a negligible function $\epsilon_2$ so that:

\begin{equation}\label{eqn2}
\Bigl| \text{Pr}\bigl[\mathcal A(\mathcal{O}(D'_{a,b}),\mathcal{O}(I_{2n})) = 1\bigr]
- \text{Pr}\bigl[\mathcal A\bigl(\mathcal{O}(D'_{a,b})\otimes\mathcal{O}(C_{a,b})\bigr) = 1\bigr] \Bigr| 
\geq \alpha-\epsilon_2(n)\,.
\end{equation}

To complete the proof, we explicitly define the poly-time circuit-pair family $\mathcal C$. The distribution $\mathcal C_n$ is generated by choosing $a, b$ uniformly at random from $\{0,1\}^n$, and then choosing a bit $r \in {0, 1}$ at random; if $r = 0$, we output the circuit pair $(C_{a,b}, D'_{a,b})$, and otherwise we output $(I_{2n}, D'_{a,b})$. For this distribution, equations \eqref{eqn11} and \eqref{eqn2} together show that no state-pair family is an obfuscation of $\mathcal C$.
\end{proof}

\subsubsection{Generalizing the impossibility result}

Our goal in this section is to extend the two-circuit impossibility proof from the prior section to the case of obfuscating a single circuit. For our impossibility proof, we require an additional condition on the obfuscator: that each of its outputs is reusable a polynomial number of times. This is a natural condition which is automatically satisfied by classical obfuscators (as well as quantum obfuscators with classical outputs), since their outputs can be perfectly copied.

\begin{definition}\label{def:vbb-reusable}
A \textbf{reusable-black-box quantum obfuscator} is a quantum algorithm $\algo O$ and a QPT $\algo J$ such that whenever $C$ is an $n$-qubit quantum circuit, $\mathcal{O}(C)$ is an $m$-qubit quantum state satisfying
\begin{enumerate}
\item (polynomial slowdown)  $m = \text{poly}(n, |C|)$;
\item (functional equivalence) $\bigl\| \algo J ( \mathcal O(C) \otimes \,\cdot\, ) - C \,\cdot\, C^\dagger \bigr\|_\diamond \leq \negl(n, |C|)$;
\item (reusability) after execution of $\algo J$, an output register contains a state which satisfies (2.);
\item (virtual black-box) for every QPT adversary $\mathcal A$ there exists a QPT simulator $\mathcal S^{U_C}$ such that:
$$
\Bigl| \emph{Pr}[\mathcal A(\rho_{(i)}) = 1] - \emph{Pr}\bigl[\mathcal S^{U_C}\bigl(\ket{0}^{\otimes |C|}\bigr) = 1\bigr] \Bigr| \leq \negl(n, |C|)\,.
$$
\end{enumerate}
\end{definition}

We remark that reusability can be achieved in any number of ways: by providing a state which partially survives uses by the interpreter $\algo J$, by providing sufficiently many copies, or by providing a means of cloning the state. We prove impossibility of the above definition in any setting where the adversary receives two copies of the obfuscator output, even on identical inputs. This is automatically satisfied if the obfuscator provides multiple copies in order to satisfy reusability, or if the state is (even approximately) cloneable. The key new obstacle is to prove impossibility even though the functionality for both copies is \emph{the same.}

To state the result, we define (in analogy to circuit-pair families and state-pair families) a \emph{circuit family} to be an ensemble of distributions over circuits, and a \emph{state family} to be an ensemble of distributions over states. A state family $\mathcal C'$ is said to be an obfuscation of a circuit family $\mathcal C$ if there exists a computable map $\mathcal C \rightarrow \mathcal C'$ assigning to each circuit a corresponding state, in a manner that satisfies \expref{Definition}{def:vbb-reusable}. With these definitions, we will prove the following theorem.

\begin{theorem}\label{thm:single-impossibility}
If quantum-secure one-way functions exist, then there exists a quantum circuit family $\mathcal C$ such that no state family is a reusable-black-box quantum obfuscation of $\mathcal C$.
\end{theorem}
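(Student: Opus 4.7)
My plan is to mirror the classical Barak et al.\ single-circuit impossibility proof, with two quantum adaptations: the interpreter $\mathcal J$ plays the role of ``running a circuit on its own description,'' and the IND-CCA1 symmetric-key quantum encryption guaranteed by \expref{Theorem}{thm:qOWF-implies-qSKE} safely transports an obfuscation through the outer interpreter's data path. Starting from the pair $(C_{a,b}, D'_{a,b})$ of \expref{Theorem}{thm:pair-impossibility}, I would merge them into a single family $Z_{a,b,s,K}$ indexed by uniform $a, b, s \in \{0,1\}^n$ and a uniform IND-CCA1 key $K$. Controlled by a flag register, $Z_{a,b,s,K}$ has three branches: a \emph{point-function} branch that implements $U_{a,b}$; an \emph{encryption} branch that implements $\Enc_K$ on its input; and an \emph{evaluator} branch which, on input a ciphertext $c$ and an output register $z$, internally applies $\Dec_K$ to recover a claimed obfuscation $\rho$, invokes a unitary dilation of $\mathcal J$ on $\rho \otimes \ket{0, a, 0}$, coherently tests whether the last register equals $\ket{b}$, and if so XORs $s$ into $z$. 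All three branches have polynomial size because $\mathcal J$, $\Enc_K$, and $\Dec_K$ are QPTs, and the circuit family $\mathcal C$ is the ensemble of these circuits under the uniform random choice of $(a, b, s, K)$.

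For the adversary, reusability yields multiple copies $\sigma_1, \sigma_2, \ldots$ of $\mathcal O(Z_{a,b,s,K})$. The attack proceeds in two nested steps: first, run the encryption branch on input $\sigma_1$ using $\sigma_2$ as advice to $\mathcal J$, producing a ciphertext $c \approx \Enc_K(\sigma_1)$; then run the evaluator branch on input $(c, 0)$ using a fresh copy as advice to $\mathcal J$. Applying the functional-equivalence guarantee to each outer call of $\mathcal J$ and once more to the inner invocation of $\mathcal J$ inside the evaluator branch on $\sigma_1 \otimes \ket{0, a, 0}$, the output register collapses to $\ket{s}$ with probability $1 - \negl(n)$, because $\sigma_1$ is a valid obfuscation of $Z_{a,b,s,K}$ and its point-function branch on $a$ yields $b$. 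Hence the distinguisher that checks whether the extracted value equals a challenge $s$ succeeds with probability $1 - \negl(n)$, while any procedure independent of $s$ succeeds with probability at most $2^{-n}$.

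On the simulator side, I would argue that no QPT $\mathcal S^{U_{Z_{a,b,s,K}}}$ can output $s$ with non-negligible probability. The only branch whose outputs depend on $s$ is the evaluator, and triggering its XOR step requires submitting a ciphertext whose decryption satisfies $\mathcal J(\Dec_K(c) \otimes \ket{0, a, 0}) \to \ket{0, a, b}$. The oracle $U_{Z_{a,b,s,K}}$ grants the simulator a quantum $\Enc_K$-oracle (via the encryption branch) but no direct decryption access, only an extremely restricted ``validity'' oracle. A hybrid argument first replaces $s$ by a freshly sampled string in the simulator's view, then uses IND-CCA1 security (\expref{Theorem}{thm:IND-CCA1}) together with the Grover lower bound from the proof of \expref{Theorem}{thm:pair-impossibility} to show that, lacking both $K$ and knowledge of $(a,b)$, the simulator cannot cause the XOR step to fire. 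The main obstacle will be this CCA-style reduction: the evaluator is not quite a standard decryption oracle, and one must verify that the partial ``is this a valid obfuscation of $Z$?'' leakage it provides is no stronger than a bounded non-adaptive decryption oracle. I expect to handle this by collecting the simulator's oracle calls that query the evaluator branch into a single non-adaptive ciphertext-validity phase, then applying CCA1 security to replace the evaluator branch by a trivial always-reject oracle up to negligible error.
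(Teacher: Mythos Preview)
Your proposal has a genuine circularity that the paper explicitly identifies and works around. The evaluator branch of your circuit $Z_{a,b,s,K}$ must, by your own description, ``invoke a unitary dilation of $\mathcal J$ on $\rho \otimes \ket{0,a,0}$,'' where $\rho$ is the decrypted obfuscation. That means the evaluator branch contains, as a subcircuit, the interpreter $\mathcal J_m$ acting on $m$-qubit advice states, where $m$ is the number of qubits in $\mathcal O(Z_{a,b,s,K})$. But by the polynomial-slowdown clause of \expref{Definition}{def:vbb-reusable}, $m = \text{poly}(n, |Z_{a,b,s,K}|)$, and $|Z_{a,b,s,K}|$ in turn must be at least $|\mathcal J_m|$, which is polynomial in $m$. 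You cannot fix the input width or the gate count of $Z$ before you know $m$, and you cannot know $m$ before you fix $Z$. Your sentence ``all three branches have polynomial size because $\mathcal J$, $\Enc_K$, and $\Dec_K$ are QPTs'' hides exactly this issue: $\mathcal J$ is polynomial in the size of its advice register, not in $n$.

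The paper runs into the identical obstruction (it calls it a ``disastrous flaw'' in its Attempt~2) and resolves it by a different decomposition: the homomorphic component of the hard-wired circuit is cut down so that it applies only a \emph{single two-qubit gate} under the encryption, i.e.\ it implements $\Enc_K \circ G \circ \Dec_K$ for a gate $G$ specified in a constant-size register. The adversary $\mathcal A$ then carries out the execution of $\mathcal J_m$ \emph{externally}, gate by gate: it first encrypts each qubit of the second copy $\mathcal O(C_0)'$ using an $\Enc$ branch, then iterates through the (publicly known) gates of $\mathcal J_m$, invoking the $\Homorcl$ branch once per gate via the reusable first copy. This breaks the circularity because the hard-wired circuit now has input width independent of $m$; only $\mathcal A$ needs to know $m$, and $\mathcal A$ is defined after the obfuscator. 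Your encryption branch, which swallows the whole obfuscation in one call, and your evaluator branch, which runs $\mathcal J$ internally, both need to be replaced by this per-gate mechanism. Once that is done, the simulator-side argument becomes the hybrid of \expref{Lemma}{lemma-circuitdistribution}: replace the $\Homorcl$ responses one at a time by encryptions of $\ket{0}$ and reduce each step to IND-CCA1; your proposed ``collect into a non-adaptive validity phase'' maneuver is neither needed nor obviously sound as stated.
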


Since the full proof of \expref{Theorem}{thm:single-impossibility} is somewhat lengthy and involved, we will first prove a simpler case, showing that quantum circuits cannot be obfuscated into quantum circuits, under any of the definitions considered so far---even the strongest one, \expref{Definition}{def:vbb-obfuscator}.) This corollary (stated below as \expref{Theorem}{thm:cor}) is arguably the most direct quantum generalization of the impossibility result of ~\cite{BGIRSVY01}. Once we have proved it, we will explain in detail how the proof should be adapted in order to achieve \expref{Theorem}{thm:single-impossibility}.

\begin{theorem}\label{thm:cor}
If quantum-secure one-way functions exist, then there exists a quantum circuit family $\mathcal C$ such that no quantum circuit family is a black-box obfuscation of $\mathcal C$.
\end{theorem}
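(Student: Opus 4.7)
The plan is to adapt the two-circuit construction from \expref{Theorem}{thm:pair-impossibility} by fusing the pair $(C_{a,b}, D'_{a,b})$ into a single quantum circuit $F_{a,k}$ with a mode-selector bit, exploiting the key difference from the two-circuit setting: in the present theorem the obfuscator's output is a classical description of a quantum circuit, which can be duplicated freely, so a single obfuscation can play both the target and tester roles simultaneously. The quantum-secure one-way function hypothesis supplies, via \expref{Theorem}{thm:qOWF-implies-qPRF}, a quantum-secure pseudorandom function $f_k : \{0,1\}^n \to \{0,1\}^n$; I will set $b := f_k(a)$ so that the ``marked value'' $b$ is pseudorandom from the simulator's perspective even conditioned on partial information about $k$, which will be crucial for the hybrid in the simulator lower bound.

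Concretely, for $a, k \inrand \{0,1\}^n$, I would define $F_{a,k}$ with a control bit $c$: when $c = 0$ it acts as the point function $U_{a, f_k(a)}$ from \eqref{eq:pair-impossible}; when $c = 1$, on input a classical description of a quantum circuit $D$ together with a workspace, it applies $D$ to $\ket{a}\ket{0}$, measures the designated output register, and returns $1$ if the result equals $f_k(a)$ and $0$ otherwise. The adversary $\algo A$, upon receiving $\algo O(F_{a,k})$, duplicates its classical description and feeds one copy into Mode $c=1$ of the other; by the functional equivalence of the inner copy, the measurement yields $f_k(a)$ and Mode $1$ returns $1$. Against a ``modified'' family $F^*_{a,k}$ whose Mode $0$ has been replaced by the identity, the same procedure instead returns $0$, because the inner copy now maps $\ket{a}\ket{0}$ to itself.

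To rule out a black-box QPT simulator, I would show that no $\algo S^{F_{a,k}}$ can distinguish $F_{a,k}$ from $F^*_{a,k}$. Mode $0$ oracle queries reduce to unstructured search for the secret input $a$, ruled out by a BBBV-style bound as in \eqref{eqn11}. A Mode $1$ oracle query yields ``accept'' only if the simulator submits a circuit $D$ satisfying $D(\ket{a}\ket{0}) = \ket{a}\ket{f_k(a)}$, and the qPRF assumption enters through a hybrid: replacing $f_k$ by a truly random function is indistinguishable to the simulator by \expref{Definition}{def:quantum-secure-prf}, so the target value is effectively a uniform independent string, and finding any passing $D$ again reduces to unstructured search over the $2^{2n}$ space of pairs $(a, b)$. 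Combining these two lower bounds with the standard three-way hybrid ($F_{a,k} \to$ truly-random-target version $\to F^*_{a,k}$) yields the desired $\negl(n)$ simulator advantage.

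The main obstacle, I expect, will be reconciling the approximate functional equivalence of \expref{Definition}{def:vbb-obfuscator} with the nested self-execution inside $\algo A$'s attack: executing $\algo O(F_{a,k})$ inside another copy of itself compounds the diamond-norm error of the obfuscator, and the intermediate measurement in Mode $1$ is sensitive to this error. I plan to handle this by polynomial amplification of the Mode $1$ test (majority voting across repeated independent runs before producing the final output bit) and by observing that the $\negl(n)$ perturbation introduced by the obfuscator does not upset the qPRF-based hybrid. The result should be that $\algo A$ distinguishes the two families with advantage $1 - \negl(n)$ while every black-box simulator achieves only $\negl(n)$, completing the impossibility.
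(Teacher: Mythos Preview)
There is a genuine gap: your construction runs into the ``self-reference in size'' obstruction that the paper explicitly identifies and works around. Your Mode~$1$ must accept as input a full classical description of a quantum circuit $D$ and then universally simulate $D$ on $\ket{a}\ket{0}$. For the attack to go through, you need to feed $\algo O(F_{a,k})$ itself into Mode~$1$, so the input register of $F_{a,k}$ must have at least $|\algo O(F_{a,k})|$ qubits. But then $F_{a,k}$'s own description length is at least $|\algo O(F_{a,k})|$ (indeed strictly larger, since Mode~$1$ must also contain a universal simulator for circuits of that length), and the obfuscator is entitled to expand descriptions polynomially, so $|\algo O(F_{a,k})|$ is in turn polynomially larger than $|F_{a,k}|$. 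This circularity has no fixed point: you cannot choose the input width of $F_{a,k}$ in advance so that the eventual obfuscation fits inside it. The ``main obstacle'' you anticipate (compounded diamond-norm error) is not the real difficulty; the real difficulty is that the circuit cannot swallow its own obfuscation whole.

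The paper's resolution is precisely to avoid ever feeding the full description of $\algo O(C_0)$ into the combined circuit. Instead, the $D$-component exposes (i) an encryption $\Enc_k(\ket a)$, (ii) a per-gate homomorphic oracle $\Homorcl_k(G,\rho)=\Enc_k(G(\Dec_k(\rho)))$, and (iii) a final check $B_{k,a,b}$ that tests whether a ciphertext decrypts to $b$. The adversary $\algo A$ holds the classical description of $\algo O(C_0)$ and, \emph{outside} the circuit, walks through its gates one at a time, invoking $\Homorcl_k$ once per gate; each such call has input size only $O(n)$, so no self-reference arises. This is exactly where the qOWF hypothesis is used: it yields (via \expref{Theorem}{thm:qOWF-implies-qSKE}) an IND-CCA1 symmetric-key scheme, and IND-CCA1 is what lets the hybrid argument go through when the simulator has oracle access to $\Homorcl_k$ (which leaks decryptions of adversarially chosen ciphertexts). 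By contrast, your use of the qOWF assumption---setting $b=f_k(a)$ via a qPRF---does essentially no work: since $a$ is already uniformly random and hidden, taking $b$ uniformly random (as in \expref{Theorem}{thm:pair-impossibility}) would serve the same purpose, and the qPRF hybrid you sketch does nothing to address the size obstruction above.
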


\begin{proof}
Let $\mathcal O$ be a black-box quantum obfuscator satisfying \expref{Definition}{def:vbb-obfuscator}, such that its outputs are classical bitstrings. Since these states are used to describe an efficiently implementable quantum computation, we can assume that these bitstrings are in fact quantum circuits under some particular encoding. 

To construct the unobfuscatable circuit family, we will need a notion of combining the functionality of two quantum circuits into one. 
\begin{definition} The {\bf combined quantum circuit} of a finite collection $\lbrace C_1,C_2,...,C_k \rbrace$ of $n$-qubit quantum circuits is the circuit that has two registers (a control register of $\log{k}$ qubits, and an input register of $n$ qubits) and, controlled on the value of the first register, applies the respective quantum circuit to the input register.   
\end{definition}

\noindent Notice that if each circuit $C_i$ in the collection is polynomial size, and $k$ is bounded by a polynomial in $n$, then the associated combined quantum circuit is also of polynomial size. We will denote the operation of combining circuits with $\#$. For example, the combined circuit of two circuits $C_1$ and $C_2$ is denoted $C_1 \# C_2$. 

Now recall the two circuits  $C_{a,b}$ and $D_{a,b}$ from Section \ref{sec-twocircuit}, as well as the circuit $I_{2n}$, which simply implements the identity operator on $2n$ qubits. Consider the combined quantum circuits $C_{a,b}\#D_{a,b}$ and $I_{2n}\#D_{a,b}$, sampled by selecting $a$ and $b$ uniformly at random from $\{0, 1\}^n$. We again choose $C = C_{a, b}$ or $C = I_{2n}$, each with probability $1/2$, and ask the adversary to determine which is the case. Using the same reasoning as in the proof of \expref{Theorem}{thm:pair-impossibility} from Section \ref{sec-twocircuit}, these combined quantum circuit distributions are indistinguishable from the perspective of any QPT simulator that is ignorant of $a$ and $b$, and is given only black-box access to $U_{C \# D_{a, b}}$.  On the other hand, unlike in the prior proof, it is not immediately apparent how to distinguish the two possibilities given a circuit description of $\algo{O}(C \#D_{a,b})$. Still, the idea is simple. Suppose we have two copies of the circuit. We can hard-wire the control register of one copy to implement $D_{a, b}$, and hard-wire the control register of the other copy to implement $C$. If we then run the first copy on the second, the result should be equivalent to implementing $D_{a, b}$ on input $C$, which will determine the nature of $C$ and conclude the proof just as in \expref{Theorem}{thm:pair-impossibility}.

Unfortunately, this idea does not work as stated, because the two circuit copies have the same size. Since they are functionally nontrivial, their input size is much smaller than their description, making it impossible to run one on the other. The core difficulty is that $D_{a,b}$ cannot be made large enough to universally execute circuits of size $|\algo O(C \# D_{a, b})| > p(|D_{a, b}|)$ where $p$ is the running time of $\algo O$. This issue arises in the classical proof as well, and is resolved as follows. First, note that we could have $D_{a, b}$ simply provide $a$ and $b$, thus offloading the gate-by-gate execution of $C$ to the algorithm $\algo A$ in the black-box definition. Unfortunately, this would also provide the simulator $\algo S$ with $a$ and $b$, enabling it to simulate $\algo A$. The resolution is to have $D_{a, b}$ provide \emph{encryptions of} $a$ and $b$, as well as a \emph{quantum fully-homomorphic encryption (QFHE) oracle} for homomorphically applying the gates of $C$. We emphasize that the functionality and security of the QFHE oracle crucially depends on the obfuscation property; in particular, an actual QFHE scheme is not required for the proof.

Concretely, we prove the following Lemma, which is a quantum analogue of Lemma 3.6 (including Claim 3.6.1) from~\cite{BGIRSVY12}.

\begin{lemma}\label{lemma-circuitdistribution}
If quantum-secure one-way functions exist, then for each $n \in \mathbb{N}$ and $a, b\in\{0,1\}^n$, there exists a distribution $\mathcal{D}_{a,b}$ over circuits such that:
\begin{enumerate}
\item There exists a PPT algorithm that, given $n\in\mathbb{N}$ and $a,b\in\{0,1\}^n$, samples from $\mathcal{D}_{a,b}$;
\item There is a QPT algorithm $\algo{A}$ so that $C \ket{a} \ket{0^n} = \ket{a}\ket{b}$ implies $\algo{A}^{U_D}(C, 1^n) = a$; this holds for all $n \in \mathbb{N}$, all $a,b\in\{0,1\}^n$, any $D\in \supp(\mathcal{D}_{a,b})$, and every $n$-qubit circuit $C$;
\item{For any QPT $S$, $\Pr[S^{U_D}(1^n)=a] \leq \negl(n)$, where the probability is over $a,b\in\{0,1\}^n$, $D\sim \mathcal{D}_{a,b}$, and the measurements of $S$.}	
\end{enumerate}
\end{lemma}

\begin{proof} The distribution $\mathcal D_{a, b}$ will be sampled by choosing $k, r \inrand \{0, 1\}^{2n}$. The bitstring $k$ is to serve as a private key for the IND-CCA1-secure symmetric-key quantum encryption scheme from \expref{Theorem}{thm:IND-CCA1}. Each circuit $D\in \supp(\mathcal{D}_{a,b})$ will be a combination (again via \#) of the following three circuits. 
\begin{enumerate}
\item $E_{K,a}$; this outputs $\Enc_k(\ket{a})$, executed with randomness $r$. 
\item $\Homorcl_K(G,\rho)$; on input a gate description $G$ and a state $\rho$, it outputs $[\Enc_k \circ G \circ \Dec_k](\rho)$.
\item $B_{k,a,b}$; on input $\rho$, it outputs $\ket{a}$ if $\Dec_{k}(\rho)= \ket{b}$ and $\ket{0^n}$ otherwise.
\end{enumerate}
We remark that the $\Homorcl$ oracle requires randomness in order to re-encrypt the state. This is handled in the usual way: we expand the input in some register via a (quantum-secure) pseudo-random function; these exist by the assumption of quantum-secure one-way functions and \expref{Theorem}{thm:qOWF-implies-qPRF}. 

Clearly, given $a$ and $b$, $\mathcal{D}_{a,b}$ can be sampled efficiently by choosing $k$ uniformly at random and outputting the combined quantum circuit $D_{k, a, b} := E_{k,a}\#\Homorcl_{k}\#B_{k,a,b}$. This establishes Property 1 from the Lemma.  For Property 2, let $\algo{A}$ be the algorithm that, on input a quantum circuit $C$, (i.) uses the first two circuits comprising $D_{k,a,b}$ to homomorphically simulate $C$ gate-by-gate on $a$, and then (ii.) plugs the final state into $B_{K,a,b}$. 

To complete the proof of \expref{Lemma}{lemma-circuitdistribution}, we must verify Property 3, i.e., that no QPT simulator algorithm that has black-box access to  each of the three algorithms comprising $D_{K,a,b}$ can discover $a$ with non-negligible probability.  This amounts to showing that
\begin{equation}\label{eq:hom-gap}
\Bigl| \Pr[\mathcal S^{\Homorcl_k,\Enc_k}(\Enc_k(\ket{0})) = 1] 
- \Pr\bigl[\mathcal S^{\Homorcl_k,\Enc_k}(\Enc_k(\ket{a})) = 1 \Bigr| 
\leq \negl(n)\,,
\end{equation}
where the probabilities are over $k \inrand \{0,1\}^n$ and the measurement outcomes of $\algo{S}$. We proceed by contradiction, and assume that there's a QPT $\algo{S}$ that violates the claim. 

First, we replace the responses to all of $\algo{S}$'s queries to the $\Homorcl_k$ oracle with encryptions of $|0^n\rangle$, and deploy a hybrid argument to show that the success probability of $\algo S$ is not significantly affected. To this end, consider the following hybrids of the computation of $\algo S$ on input $\Enc_k(\ket{a})$: in the $i$-th hybrid, the first $i$ oracle queries of $\algo S$ are answered using $\Homorcl_k$, and the rest are answered with $\Enc_k(|0^n\rangle)$.  Notice that any gap in distinguishing between the $i$ and $i+1$st hybrid must be due to the $i+1$st query $\algo S$ makes to $\Homorcl_k$. We can use this to create a CCA1 adversary $\algo T$ which breaks the encryption scheme, as follows. The QPT $\algo T$ simulates $\algo S$ and replies to its first $i$ oracle queries by means of $\algo T$'s $\Enc_k$ and $\Dec_k$ oracles. Upon receiving the challenge ciphertext, $\algo T$ passes it to $\algo S$ as the response to its $i+1$st oracle query. Finally, $\algo T$ answers the remaining oracle queries of $\algo S$ with $\Enc_K(|0^n\rangle)$. We conclude that, if $\algo S$ violated \eqref{eq:hom-gap}, then $\algo T$ succeeds with non-negligible probability. This establishes (by contradiction) that we can replace the oracle queries of $\algo S$ with $\Enc_k(\ket 0)$. With this replacement, $\algo S$ can distinguish an encryption of $|0^n\rangle$ from an encryption of $\ket{a}$, when given access to only an encryption oracle, which again contradicts IND-CCA1 security of the scheme. 
\end{proof}

Now we are ready to describe the unobfuscatable family of quantum circuits and complete the proof of \expref{Theorem}{thm:cor}. First, for a fixed $a,b \in \{0, 1\}^n$ we let $\mathcal{D}_{a,b}$ be the distribution over circuits constructed in \expref{Lemma}{lemma-circuitdistribution}. Then consider the following two distributions over circuits: 
\begin{enumerate}
\item $\mathcal{F}_{n}$: Choose $a,b \inrand \{0,1\}^n$,  sample a circuit $D_{a, b}$ from $\mathcal{D}_{a,b}$ and output $C_{a,b}\# D_{a,b}$ 
\item $\mathcal{G}_{n}$: Choose $a,b \inrand \{0,1\}^n$,  sample a circuit $D_{a, b}$ from $\mathcal{D}_{a,b}$ and output $I_{2n}\# D_{a,b}$
\end{enumerate}
By Property 2 of \expref{Lemma}{lemma-circuitdistribution} there exists an algorithm $\algo{A}$ that, on input $\mathcal{O}(C_0)$, accepts if $C_0$ was sampled from $\algo{F}_n$ and rejects if $C$ was sampled from $\algo{G}_n$.  Thus there exists a constant $\alpha$ and a negligible function $\epsilon_1$ so that:
$$
\Bigl| \text{Pr}\bigl[\mathcal A(\mathcal{O}(\mathcal{F}_n)) = 1\bigr]
- \text{Pr}\bigl[\mathcal A\bigl(\mathcal{O}(\mathcal{G}_n)\bigr) = 1\bigr] \Bigr| 
\geq \alpha-\epsilon_1(n)\,.
$$
On the other hand, by Property 3 of \expref{Lemma}{lemma-circuitdistribution}, we know that for every QPT $S$ there exists some negligible function $\epsilon_2$ so that: 
$$
\Bigl| \text{Pr}\bigl[\mathcal S^{\mathcal{F}_n}\bigl(\ket{0}^{\otimes n}\bigr) = 1\bigr]
- \text{Pr}\bigl[\mathcal S^{\mathcal{G}_n}\bigl(\ket{0}^{\otimes n}\bigr) = 1\bigr] \Bigr| 
\leq \epsilon_2(n)\,.
$$
We conclude that the circuit family formed by taking the union of $\mathcal F_n$ with $\mathcal G_n$ (and assigning them each equal probability) is an unobfuscatable circuit family.
\end{proof}

We now return to the proof of \expref{Theorem}{thm:single-impossibility}, and show how to extend the above proof to the case where the obfuscator outputs reusable quantum states.

\begin{proof} (of \expref{Theorem}{thm:single-impossibility})
Our proof will still use the same distribution $\mathcal D_{a, b}$ over circuits, which were provided by \expref{Lemma}{lemma-circuitdistribution} and described above, but with some slight modifications. The goal will still be to take two copies of $\algo O(C_0)$ for any circuit $C_0$ sampled from that distribution, and give an algorithm $\algo A$ that can ``execute one copy on the other.'' This will enable us to distinguish if $C_0$ is from the distribution $\mathcal F_n$, or the distribution $\mathcal G_n$ (just as above), a task which is impossible with only black-box access.

However, executing one copy of $\algo O(C_0)$ on another is now somewhat more complicated, due to the fact that we no longer have explicit circuit descriptions in hand, and must instead use the interpreter $\algo J$ (with some register initialized to $\algo O(C_0)$) whenever we want to run $C_0$. To do this, we will need to describe a new distribution $\mathcal D_{a, b}'$ of circuits, closely related to the distribution $\mathcal D_{a, b}$ from \expref{Lemma}{lemma-circuitdistribution}.

\textbf{Attempt 1.~}To warm up, a first attempt at describing $\algo A$ and the modified circuits $D_{a, b}'$ from the distribution $\mathcal D_{a, b}'$ is as follows. First, we simply define $D_{a, b}'$ to be a composition of circuits which simply output both $a$ and $b$. Set $C_0 = C \# D_{a, b}'$, and suppose that inputting $\ket{0}$ in the first register executes the first circuit in the combination, while $\ket{1}$ executes the second circuit in the combination. The algorithm $\algo A$ is in possession of two copies of $\algo O(C_0)$. It performs:
\begin{enumerate}
\item run $\algo J ( \algo O(C_0) \otimes \ket{1} \ket{0^n})$ to retrieve $\ket{a} \ket{b}$ (by functional equivalence of $\algo O$);
\item run $\algo J( \algo O(C_0) \otimes \ket{0} \ket{a})$ (now using the other copy of $\algo O(C_0)$);
\item compare the result to $\ket{b}$.
\end{enumerate}
This does exactly what we want, except of course that the black-box simulator $\algo S$ will also be able to retrieve $a$ and perform this experiment. Our valiant attempt failed.

\textbf{Attempt 2.~}Undeterred, we now try a more sophisticated approach, returning to the idea of encryption and homomorphic execution. We now ask that (as before) $\mathcal D_{a, b}'$ outputs an encryption $\Enc(a)$ of $a$ (when given the flag $A$), implements a $\Homorcl$ oracle (when given the flag $H$), and checks if the input is $\Enc(b)$ (when given the flag $B$). We again set $C_0 = C \# D_{a, b}'$ and give $\algo A$ two copies of $\algo O(C_0)$; for convenience we denote them $\algo O(C_0)$ and $\algo O(C_0)'$. Now, $\algo A$ performs: 
\begin{enumerate}
\item run $\algo J \bigl( \algo O(C_0) \otimes [\ket{1A} \otimes \ket{0^n}]\bigr)$ to retrieve $\Enc(a)$;
\item run $\algo J \bigl( \algo O(C_0) \otimes [\ket{1H} \otimes \algo O(C_0)' \otimes \ket{0} \otimes \Enc(a)]\bigr)$ to ``homomorphically run $C$'';
\item run $\algo J \bigl( \algo O(C_0) \otimes [\ket{1B} \otimes \rho]\bigr)$ where $\rho$ is the output of the previous step; output the result.
\end{enumerate}
The first and last step are largely self-explanatory: we start with the encryption of $a$, and check at the end that we have the encryption of $b$. What happened in the second step? We tried to homomorphically evaluate\footnote{Selecting $C$ was done by passing the bit flag $\ket 0$ into the control register.} $C$ on $\Enc(a)$. By functional equivalence of $\algo O$, we executed the first copy of $C_0$ on input $\ket{1H} \otimes \algo O(C_0)' \otimes \Enc(a)$; this specifies that $D_{a, b}'$ should run the $\Homorcl$ oracle with input $\mathcal O(C_0)' \otimes \Enc(a)$. To make this sensible, we can redefine $\Homorcl$ to accept two registers, and homomorphically evaluate the appropriate circuit of $\algo J$; the result is that, whenever $\Homorcl$ is called on $\mathcal O(C) \otimes \Enc(z)$ for a circuit $C$ and state $z$, the output is $\Enc(C(z))$.

This attempts looks like it succeeds, but there is a disastrous flaw: $\Homorcl$ must now accept inputs with at least as many qubits as $\algo O(C_0)$, which is significantly bigger than the circuit description allowed for $\Homorcl$ itself (since it is a part of $D_{a, b}'$ and hence also of $C_0$).

\textbf{Attempt 3.~} In the final attempt, we will repair the flaw of Attempt 2. The key is to again offload some of the execution, but this time from the $\Homorcl$ oracle to the main algorithm $\algo A$. More precisely, we will expand step (2) in Attempt 2, and execute it gate-by-gate. In this iteration, $\Homorcl$ is back to its original version, and is used only to apply two-qubit gates. It accepts $n$ input qubits, decrypts them, applies the desired gate (as specified in another register), and then re-encrypts. In addition, will also expand $D_{a, b}'$ to provide an $\Enc$ oracle (when given the flag E); we can do this for free, by equation \eqref{eq:hom-gap} in \expref{Lemma}{lemma-circuitdistribution}. The final algorithm $\algo A$ will proceed as follows. Here we have let $m$ denote the number of qubits of $\algo O(C_0)'$, and we let $J_m$ be the circuit of $\algo J$ for executing $m$-qubit obfuscated states.
\begin{enumerate}
\item run $\algo J \bigl( \algo O(C_0) \otimes [\ket{1E} \otimes \algo O(C_0)'_{(k)}]\bigr)$ for all $k \in [m]$, to encrypt all qubits of $\algo O(C_0)'$; 
\item run $\algo J \bigl( \algo O(C_0) \otimes [\ket{1A} \otimes \ket{0^n}]\bigr)$ to retrieve $\Enc(a)$;
\item let $j = 0$ and let $\sigma_j := \Enc(a)$;
\item let $G_j$ be the $j$th gate in the description of $J_m$;
\begin{enumerate}
\item let $s, t$ be the qubits $G_j$ acts on; assume $s$ is a qubit of $\algo O(C_0)'$ and $t$ is a qubit of $\sigma_j$;\footnote{This assumption is only made for simplicity of the algorithm description; the other possibilities are similar.}
\item set $\sigma_{j+1} = \algo J \bigl( \algo O(C_0) \otimes [\ket{1H} \otimes \ket{G_j, s, t} \otimes \Enc(\algo O(C_0)'_{(s)}) \otimes \sigma_j]\bigr)$;\,\footnote{The notation $\algo O(C_0)'_{(s)}$ is meant to indicate that only the $s$-th qubit of that state is to be placed in the input register.}
\item if $j = |J_m|$, continue; otherwise increment $j$ by $1$ and go to step 4.
\end{enumerate}
\item run $\algo J \bigl( \algo O(C_0) \otimes [\ket{1B} \otimes \sigma_j]\bigr)$ and output the result.
\end{enumerate}
A few remarks are in order. First, the reusability of the state $\algo O(C)$ was crucial in our ability to repeatedly execute the $\Homorcl$ oracle in step 4.(b). Second, one checks that by the functional equivalence of the obfuscator and the definition of the $\Homorcl$ oracle, if $C_0 = C_{a, b} \# D_{a, b}'$ then the state $\sigma_j$ when step 5. is reached will indeed be $\Enc(b)$. Third, note that the ``compactness'' issue of $\Homorcl$ from Attempt 2 has been resolved, and the input to $C_0$ in step 4.(b) is now of size $n$ (plus a constant.) 

Finally, despite all of the difficulties in defining the algorithm $\algo A$ appropriately, the hardness of the corresponding search problem for the black-box simulator $\algo S$ is essentially unchanged from the proof of \expref{Theorem}{thm:cor}. The only difference is that $D_{a, b}'$ now also provides an encryption oracle; the encryption scheme we selected is certainly secure in this setting. 

To finish the proof, we again build a circuit family by choosing $C_{a, b} \# D_{a, b}'$ or $I_{2n} \# D_{a, b}'$, each with equal probability, for random $a$ and $b$. By the above arguments, this circuit family is unobfuscatable. This concludes the proof of \expref{Theorem}{thm:single-impossibility}.
\end{proof}

\section{Quantum indistinguishability obfuscation}\label{sec:indistinguishability}

In this section, we analyze a different definition of quantum obfuscation, motivated by classical definitions established by Goldwasser and Rothblum~\cite{GR07}. As opposed to the black-box approach, these definitions assess the quality of an obfuscation in relative terms, e.g., as compared to other functionally-equivalent circuits (or, in the quantum case, states).  

\subsection{CPTP circuits and ensembles of states}

For our discussions on indistinguishability obfuscation, we modify the notations of \expref{Section}{sec:notation} slightly, as follows. 

First, henceforth we take the point of view that any quantum circuit $C$ can include unitary gates \emph{as well as measurement gates and instructions for discarding qubits.} In keeping with this view, we overload notation so that $C$ denotes both the circuit itself (i.e., a classical description of a set of wires and gates) as well as the CPTP map that the circuit implements. 

Second, we allow our ensembles to be indexed by infinite sets of strings rather than the natural numbers. As before, we will denote ensembles by italicized capital letters (and optionally place the indexing set in the subscript.) Elements of the ensemble will be denoted by the corresponding non-italicized capital letter (optionally, with the index in the subscript.) So, for example, we may write $\mathcal C_S := \{C_s : s \in S\}$. Just as in the single-circuit case, we will overload this notation for circuit ensembles and use it to also refer to the corresponding family of CPTP maps. We will need one new definition in this context: given two circuit ensembles $\mathcal C_S$ and $\mathcal D_S$, we say that $\mathcal C_S$ is \textbf{functionally equivalent} to $\mathcal D_S$ (denoted $\mathcal C_S \cong \mathcal D_S$) if the circuits themselves are functionally equivalent, i.e., if $\|C_s - D_s\|_\diamond \leq \negl(|s|)$ for all $s \in S$. 

We will also now need to handle infinite collections of states. We thus define a \textbf{state ensemble} to be an infinite collection $\{\rho_x : x \in X\}$ of density operators indexed by some set $X \subset \{0, 1\}^*$, such that $\rho_x \in \states(\mathcal H_{p(|x|)})$ where $p$ is bounded by some fixed polynomial function. We remark that a circuit ensemble is a special case of a state ensemble, where each state is a classical string describing wires, gates, and so on. A \textbf{uniform state ensemble} will be a state ensemble $\{\rho_x : x \in X\}$ together with a uniform circuit ensemble $\{C_x : x \in X\}$ such that $\rho_x = C_x |0^m\rangle$ for appropriately chosen $m$. We remark that a QPT is defined by choosing a uniform circuit ensemble, and that the set of possible outputs of a QPT are a uniform state ensemble. In particular, if $\mathcal S$ is a state ensemble and $\algo A$ is a QPT, then (ignoring some uninteresting bookkeeping) we may write $\algo A(\mathcal S)$ to denote the state ensemble that results from running $\algo A$ on inputs from $\mathcal S$. Note that if $\mathcal S$ is uniform, then $\algo A(\mathcal S)$ is also uniform.

Next, we consider three different notions of distinguishability for state ensembles, in order of decreasing power. We will write $\mathcal R \approx \mathcal S$ to indicate that $\mathcal R$ and $\mathcal S$ are indistinguishable state ensembles; the type of indistinguishability should be clear from context. Let $\mathcal R = \{\rho_x : x \in X\}$ and $\mathcal S = \{\sigma_x : x \in X\}$ be two state ensembles. We say that $\mathcal R$ and $\mathcal S$ are \textbf{perfectly indistinguishable} if $\rho_x = \sigma_x$ for every $x \in X$. While this is an unnatural notion of indistinguishability for general quantum states, it is more reasonable (and is easy to test for individual cases) if $\rho_x$ and $\sigma_x$ happen to all be classical. A weaker notion is \textbf{statistical indistinguishability}, which demands that $\|\rho_x - \sigma_x\|_\text{tr} \leq \poly(|x|)$ for all but finitely many $x \in X$. In the weakest notion we will consider, we say that $\mathcal R$ and $\mathcal S$ are \textbf{computationally indistinguishable} if for every QPT $\algo A$,
$$
\left|\Pr[\algo A(\rho_x) = 1] - \Pr[\algo A(\sigma_x) = 1]\right| \leq \negl(|x|)
$$
for all but finitely many $x$; here the probabilities are taken over the coins and measurements of $\algo A$. One may also consider a non-uniform version of the above definition, in which $\algo A$ also ranges over non-uniform circuit families, and is allowed access to an auxiliary state ensemble $\{\xi_x : x \in X\}$. All of our results hold (with appropriate adjustments) for both the uniform and the non-uniform setting; we will focus on the uniform setting for convenience. 

\subsection{Definitions: indistinguishability, best-possible}

We begin with the notion of a quantum indistinguishability obfuscator. As before, the interpreter and the obfuscated state must be efficient, while the obfuscation algorithm itself might not be. We also assume that all ``interpreter'' algorithms $\mathcal J$ have two registers: an advice register (where the obfuscated state is to be inserted), and an input register (where the input is to be inserted.) It will thus be convenient to write, e.g., $\mathcal J_{\rho}$ for the CPTP map family defined by the circuits of $\mathcal J$ with the advice register initialized to the state $\rho$.

\begin{definition}\label{def:translator}
A \textbf{quantum translator} is a quantum algorithm $\mathcal O$ and a QPT $\mathcal J$ such that whenever $\mathcal C$ is a circuit ensemble and $C \in \mathcal C$ is an $n$-qubit circuit, 
\begin{enumerate}
\item (polynomial slowdown) $\mathcal O(C)$ has at most $\poly(n)$ qubits;
\item (functional equivalence) $\bigl\| \algo J_{\mathcal O(C)} - C \bigr\|_\diamond \leq \negl(n)$;
\end{enumerate}
\end{definition}

\begin{definition}\label{def:indistinguishability}
A \textbf{quantum statistical (resp., computational) indistinguishability obfuscator} is a quantum translator $(\mathcal O, \mathcal J)$ such that whenever $\mathcal C_S$ and $\mathcal D_S$ are functionally-equivalent circuit ensembles with $|C_s| = |D_s|$ for all $s \in S$, then $\mathcal O(\mathcal C_S)$ and $\mathcal O(\mathcal D_S)$ are statistically (resp., computationally) indistinguishable.
\end{definition}

Note that an obfuscator which simply outputs circuit descriptions is included as a special case of the above; in that case, $\mathcal O(C)$ is always a quantum circuit, and $\mathcal J$ is a universal circuit which executes $\mathcal O(C)$ on the state given in the input register. One may also define a quantum perfect-indistinguishability obfuscator, where the obfuscated states $\mathcal O(C_1)$ and $\mathcal O(C_2)$ are identical. We remark that the condition of equal length can be relaxed to any fixed polynomial (e.g., $|C_1|$ can be of length at most $|C_2|^2$.)

The notion of indistinguishability obfuscation may not seem intuitive at first. To begin to see why it is a useful definition, we now show that it is equivalent to a semantic definition of obfuscation. To achieve this, we first need to properly define functional equivalence for state ensembles. If we fix a quantum translator $(\mathcal O, \mathcal J)$, then the QPT $\mathcal J$ defines a way to implement functionality via states. We then say that two state ensembles $\mathcal R_X$ and $\mathcal S_X$ are functionally equivalent if
$$
\bigl\| \algo J_{\rho_x} - \algo J_{\sigma x} \bigr\|_\diamond \leq \negl(|x|)
$$
for all but finitely many $x \in X$, $\rho_x \in \mathcal R_X$ and $\sigma_x \in \mathcal S_X$. In this case, we will write $\mathcal R_X \cong_{\algo J} \mathcal S_X$. This allows us to compare the relative strengths of one obfuscated ensemble versus another, as follows.

\begin{definition}\label{def:best-possible}
A \textbf{quantum statistical (resp., computational) best-possible obfuscator} is a quantum translator $(\mathcal O, \mathcal J)$ such that for any QPT $\algo A$ there exists a QPT $\algo S$ with the following property: for every circuit ensemble $\mathcal C_X$ and any uniform state ensemble $\mathcal R_X$ which is functionally-equivalent to $\mathcal O(\mathcal C_X)$ and has same-size states\footnote{meaning that, for each $x \in X$, the corresponding states in the two ensembles have the same number of qubits.}, we have that $\algo A(\mathcal O(\mathcal C_X))$ and $\algo S(\mathcal R_X)$ are statistically (resp., computationally) indistinguishable.
\end{definition}

This definition captures the relative ``leakage'' of the obfuscated state ensemble: among all functionally-equivalent state ensembles (i.e., potential obfuscations), the best-possible obfuscation is the ensemble that leaks the least. From this point of view, we think of $\algo A$ as a ``learner'' which tries to learn something from the obfuscated ensemble, and $\algo S$ as a ``simulator'' which can learn the same thing as $\algo A$, but from any other functionally-equivalent ensemble.

\begin{proposition} Let $(\algo O, \algo J)$ be a polynomial-time quantum translator. Then $(\algo O, \algo J)$ is a quantum statistical (resp., computational) best-possible obfuscator if and only if it is a quantum statistical (resp., computational) indistinguishability obfuscator.
\end{proposition}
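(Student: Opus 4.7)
The plan is to prove the biconditional by establishing each implication in turn. For the direction \emph{best-possible $\Rightarrow$ indistinguishability}, I will specialize the best-possible property to a trivial learner and exploit the universality of the resulting simulator; for the converse, I will build a universal simulator that re-obfuscates an on-the-fly constructed circuit and then appeals to the indistinguishability-obfuscation property.

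\textbf{Best-possible $\Rightarrow$ indistinguishability.} The key trick is to instantiate the best-possible property with the identity learner, i.e., the QPT $\algo A$ that outputs its input unchanged. This produces a universal simulator $\algo S$ such that for every circuit ensemble $\mathcal C_X$ and every uniform, same-size, functionally-equivalent state ensemble $\mathcal R_X$, one has $\algo O(\mathcal C_X) \approx \algo S(\mathcal R_X)$. Now, given equivalent equal-size ensembles $\mathcal C_S \cong \mathcal D_S$ from the hypothesis, observe that $\algo O(\mathcal D_S)$ is uniform, functionally equivalent to $\algo O(\mathcal C_S)$ (since $\algo J_{\algo O(C)} \cong C$ and $\mathcal C_S \cong \mathcal D_S$), and of the same state-size (since $|C_s|=|D_s|$). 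Applying the best-possible property to primary $\mathcal C_S$ with $\mathcal R_S = \algo O(\mathcal D_S)$ gives $\algo O(\mathcal C_S) \approx \algo S(\algo O(\mathcal D_S))$, and applying it to primary $\mathcal D_S$ with $\mathcal R_S = \algo O(\mathcal D_S)$ gives $\algo O(\mathcal D_S) \approx \algo S(\algo O(\mathcal D_S))$. Since the \emph{same} universal $\algo S$ appears in both, a triangle inequality (valid for both the statistical and the computational notions) yields $\algo O(\mathcal C_S) \approx \algo O(\mathcal D_S)$.

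\textbf{Indistinguishability $\Rightarrow$ best-possible.} Fix a QPT learner $\algo A$; I will construct a single $\algo S$ that works for every $(\mathcal C_X, \mathcal R_X)$. Since $\mathcal R_X$ is a uniform state ensemble, it is equipped with a uniform circuit ensemble $\{E_x\}$ satisfying $\rho_x = E_x\ket{0^m}$; $\algo S$ uses the classical description of $E_x$ to build a new classical circuit $\tilde C_x$ that first applies $E_x$ and then feeds the resulting state as advice into $\algo J$. The functionality of $\tilde C_x$ equals $\algo J_{\rho_x} \cong \algo J_{\algo O(C_x)} \cong C_x$, so $\tilde{\mathcal C}_X \cong \mathcal C_X$. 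After padding either $\tilde C_x$ or $C_x$ with identity gates so their sizes match up to the polynomial flexibility allowed by the remark following \expref{Definition}{def:indistinguishability}, $\algo S$ invokes the obfuscator on $\tilde C_x$ (this is where the polynomial-time assumption on $\algo O$ is used) and applies $\algo A$ to the result. Indistinguishability obfuscation then gives $\algo O(\tilde{\mathcal C}_X) \approx \algo O(\mathcal C_X)$, whence $\algo S(\mathcal R_X) = \algo A(\algo O(\tilde{\mathcal C}_X)) \approx \algo A(\algo O(\mathcal C_X))$, as required.

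\textbf{Main obstacle.} The principal subtlety is to reconcile the size constraints in the two definitions: indistinguishability obfuscation is phrased in terms of input-circuit size, while best-possible obfuscation is phrased in terms of output-state size. Matching these requires careful padding on both sides and relies essentially on the polynomial-size flexibility noted after \expref{Definition}{def:indistinguishability}. A secondary point is how $\algo S$ accesses the classical generator $E_x$ of the uniform ensemble $\mathcal R_X$; this is justified by interpreting a uniform state ensemble, as in \expref{Section}{sec:notation}, as carrying its uniform generating circuit ensemble as part of its data, so that $E_x$ is efficiently computable from the index without breaking the universality of $\algo S$ in $\mathcal R_X$.
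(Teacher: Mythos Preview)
Your proof is correct and follows essentially the same approach as the paper: the identity-learner trick for the forward direction, and the ``re-obfuscate the generating circuit composed with $\algo J$, then run $\algo A$'' construction for the converse. Your treatment is in fact more careful than the paper's on two points the paper glosses over, namely the size-matching/padding issue and the question of how $\algo S$ obtains the generator $E_x$ of the uniform ensemble.
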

\begin{proof}
First, let $(\algo O, \algo J)$ be a best-possible obfuscator. Set $\mathcal A$ to be the trivial learner which simply implements the identity operator, and let $\mathcal S$  be the corresponding simulator. Let $\mathcal C$ and $\mathcal D$ be two functionally-equivalent circuit ensembles with same-size circuits. Note that their obfuscations are functionally-equivalent (i.e., $\algo O(\mathcal C) \cong_{\algo J} \algo O(\mathcal D)$) same-size ensembles. By two applications of the best-possible property (one on the left, and one on the right), we get
$$
\algo O(C) = \algo A(\algo O(C)) \approx \algo S(\algo O(D)) \approx \algo A(\algo O(D)) = \algo O(D)\,,
$$
where $\approx$ denotes the appropriate form of indistinguishability (statistical or computational.) It follows that $(\algo O, \algo J)$  is an indistinguishability obfuscator.

For the other direction, let $(\algo O, \algo J)$ be an indistinguishability obfuscator. Given a learner $\algo A$, define a simulator $\algo S$ as follows. Let $(\mathcal C, \mathcal R)$ be a (circuit, ensemble) pair as in the best-possible definition. Since $\mathcal R$ is a uniform state ensemble, there is a corresponding circuit ensemble $\mathcal D$ for preparing it. Given a circuit $D \in \mathcal D$ and the corresponding circuit $J \in \algo J$ of the interpreter, we can then build a circuit $D \circ J$ which is functionally equivalent to the corresponding circuit $C \in \mathcal C$ (by the definition of a quantum translator.) Let $\mathcal {D \circ J}$ be the corresponding circuit family, and define $\algo S ( \mathcal R ) = \algo S ( \algo O ( \mathcal {D \circ J}))$. Now, by the indistinguishability property, $\algo O(\mathcal C) \approx \algo O (\mathcal {D \circ J})$, from which it follows that 
$$
\algo A( \algo O(\mathcal C)) \approx \algo A( \algo O(\mathcal {D \circ J})) = \algo S(\mathcal R)\,,
$$
as desired.
\end{proof}

We remark that the forward implication did not require the obfuscator to be efficient.

\subsection{Impossibility of statistical obfuscation}

In this section, we show that efficient perfect or statistical indistinguishability obfuscation is impossible. We begin by recalling the following computational problems about distinguishability of circuit ensembles and state ensembles, and some relevant complexity-theoretic results.

\begin{problem} \textsf{\emph{Circuit distinguishability}}.\\
\indent Input: two quantum circuits $C$ and $D$; parameter $\epsilon > 0$. \\
\indent Promise: $\|C - D\|_\diamond$ is greater than $2 - \epsilon$ or less than $\epsilon$.\\
\indent Output: YES in the former case and NO in the latter.
\end{problem}

\begin{problem} \textsf{\emph{Quantum State Distinguishability}}.\\
\indent Input: $m$-qubit quantum circuits $C_0$ and $C_1$, positive integer $k \leq m$ and parameters $a,b$ with $a<b^2$.\\
\indent Promise: let $\rho_i = \tr_{(k+1, m)}[C_i\ket{0^m}\bra{0^m}C_i^\dagger]$; then $\|\rho_0 - \rho_1\|_\emph{tr}$ is greater than $b$ or less than $a$.\\
\indent Output: YES in the former case and NO in the latter.
\end{problem}

\begin{theorem}\label{thm:QCD}
\emph{\cite{RW05}} The problem \textsf{\emph{Circuit distinguishability}} is QIP-complete for every $\epsilon > 0$. 
\end{theorem}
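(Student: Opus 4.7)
The plan is to establish both containment in QIP and QIP-hardness for every constant $\epsilon > 0$.

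For containment, I would design a three-message protocol that directly exploits the variational characterization $\|C-D\|_\diamond = \max_{\rho_{AB}} \|(C \otimes \mathrm{id})(\rho_{AB}) - (D \otimes \mathrm{id})(\rho_{AB})\|_\text{tr}$. The prover sends a bipartite state $\rho_{AB}$; the verifier privately flips a coin $b \in \{0,1\}$, applies $C$ to register $A$ if $b=0$ or $D$ if $b=1$, and sends both registers back; the prover responds with a guess $b'$, and the verifier accepts iff $b' = b$. By the Holevo--Helstrom bound applied to the two resulting output ensembles, the maximum success probability is exactly $\tfrac{1}{2} + \tfrac{1}{4}\|C-D\|_\diamond$. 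In the YES case this exceeds $1 - \epsilon/4$; in the NO case it is at most $\tfrac{1}{2} + \epsilon/4$. Standard parallel repetition (preserving membership in QIP) amplifies this constant gap to be exponentially close to $1$ vs.\ $1/2$, so the problem lies in QIP(3) = QIP.

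For QIP-hardness, I would start from an arbitrary $L \in$ QIP, which by Kitaev--Watrous can be assumed to have a three-message protocol with completeness $1 - 2^{-n}$ and soundness $2^{-n}$. The verifier is described by two unitary circuits $V_1, V_2$ that act before and after the middle prover message, on a private workspace and a shared message register. Given an input $x$, I would build two channels $C_x$ and $D_x$ whose domains are (prover message register $\otimes$ arbitrary auxiliary system) and whose ranges contain a designated output qubit. Channel $C_x$ applies $V_1$ to a fixed seed, interacts with its message input as a simulated prover would, then applies $V_2$ and outputs the verifier's accept/reject qubit; channel $D_x$ is a ``dummy'' that forces the same output distribution to be constant (e.g., always reject). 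Any adversarial distinguisher for the Choi states of $C_x$ and $D_x$ yields a concrete prover strategy for the protocol, and conversely; so the acceptance value of the protocol determines the diamond-norm distance, mapping YES instances to $\|C_x - D_x\|_\diamond \geq 2 - 2^{-n+1}$ and NO instances to $\|C_x - D_x\|_\diamond \leq 2^{-n+1}$. Choosing $n$ large enough makes both bounds stronger than any fixed $\epsilon$.

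The main obstacle will be setting up the hardness reduction so that the diamond-norm distance between $C_x$ and $D_x$ is tightly two-sided controlled by the maximum acceptance probability. This requires the Stinespring dilation of the verifier's actions together with Watrous' duality for the diamond norm, so that an adaptive prover strategy can be converted into a static purification that witnesses the diamond norm, and vice versa. Once that correspondence is in place, the robustness of the construction to any $\epsilon > 0$ follows simply from repeating the original QIP protocol before performing the reduction, which is a routine amplification step.
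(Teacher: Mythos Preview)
The paper does not prove this theorem; it is quoted from \cite{RW05} and used as a black box, so there is no in-paper proof to compare your proposal against.

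On the merits: your containment argument is correct and is essentially the standard one. Your hardness sketch, however, does not yet resolve the obstacle you yourself flag. In a three-message protocol the prover's final message depends adaptively on the verifier's reply, so a single channel $C_x$ that ``takes the prover message register as input'' and outputs the accept bit cannot encode that second move: if the input contains only the first message there is nothing to feed into $V_2$, and if it contains both messages the second one was prepared before the verifier's reply and the reduction only witnesses non-adaptive (hence much weaker) provers. Comparing such a $C_x$ to a constant-reject $D_x$ therefore does not yield QIP-hardness as written. The Rosgen--Watrous construction proceeds differently: both channels take only the prover's first message, both apply $V_1$, and both \emph{output} the reply register together with a scrambled copy of the verifier's private workspace; the two channels differ only in how that workspace is prepared (one of them via $V_2^{\dagger}$ with the accept qubit forced). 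An optimal input-plus-distinguisher for these two outputs then corresponds, via Uhlmann's theorem on the relevant purifications, to an optimal adaptive second move for the prover. You name the right tools (Stinespring, the purification/diamond-norm duality), but the concrete $C_x, D_x$ you describe are not the ones that make those tools apply.
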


\begin{theorem}\label{thm:QSD}
\emph{\cite{Watrous02}} The problem \textsf{\emph{Quantum State Distinguishability}} is QSZK-complete.
\end{theorem}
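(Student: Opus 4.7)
The plan is to establish both directions of Theorem~\ref{thm:QSD} separately: containment QSD $\in$ QSZK, and QSZK-hardness of QSD.

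For containment, I would design a direct honest-verifier protocol. On input $(C_0, C_1, k, a, b)$, the verifier picks a uniform bit $c \in \{0,1\}$, runs $C_c$ on $\ket{0^m}$, traces out qubits $k+1, \dots, m$ to obtain $\rho_c$, sends $\rho_c$ to the prover, and accepts iff the prover returns $c$. By Holevo--Helstrom, the optimal prover succeeds with probability exactly $\tfrac12 + \tfrac12 \|\rho_0 - \rho_1\|_{\text{tr}}$, converting the promise directly into a completeness--soundness gap. Statistical zero-knowledge is immediate: the honest verifier's transcript is $(c, c)$, and since she chose $c$ herself, the simulator reproduces the accepting transcript perfectly. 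If $a, b$ are not well-separated enough for a standard proof-system gap, I would apply a polarization step (see below).

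For QSZK-hardness I would use a simulator-based reduction. Given a QSZK protocol $(P, V)$ for $L$ with honest-verifier simulator $S$, let $\sigma_i(x)$ denote the simulated state on the verifier's workspace just after round $i$. On $x \in L$, the zero-knowledge property forces $\sigma_i(x)$ to be trace-close to the actual interaction state at that round, so applying the verifier's next unitary $V_{i+1}$ to $\sigma_i(x)$ and restricting to the verifier-side registers must yield something close to $\sigma_{i+1}(x)$. Conversely, on $x \notin L$ these per-round consistencies cannot all hold, because chaining them would show that the simulator's final acceptance matches a real interaction with some cheating prover strategy, contradicting soundness. Since we do not know the ``bad'' round in advance, the reduction outputs the pair of state-preparation circuits for $\sigma_{i+1}(x)$ and $V_{i+1}(\sigma_i(x))$ averaged over $i$ via a classical round-label register; the resulting trace distance is negligible on $x \in L$ and at least $1/\poly(n)$ on $x \notin L$.

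The main obstacle in both directions is matching the crude gap to the QSD promise $a < b^2$. In the containment direction the Holevo--Helstrom gap might be too narrow; in the hardness direction the $1/\poly(n)$ gap from round-averaging is inverse-polynomially small and must be boosted to any constant. Both are handled by a polarization lemma for QSD, analogous to the Sahai--Vadhan polarization lemma for classical statistical difference, but whose proof is technically more delicate quantumly: the particular form $a < b^2$ of the promise arises precisely from the Fuchs--van de Graaf bounds relating trace distance and fidelity under tensor powers, which behave multiplicatively in fidelity rather than in trace distance. The crux of the proof is establishing this quantum polarization lemma and ensuring the resulting state-preparation circuits remain efficient after round-averaging and amplification.
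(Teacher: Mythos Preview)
The paper does not prove Theorem~\ref{thm:QSD} at all: it is stated as a citation to Watrous~\cite{Watrous02} and used as a black box in the impossibility results of Section~5.3. There is thus no ``paper's own proof'' to compare against.

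That said, your sketch is a faithful outline of Watrous's original argument. The containment direction via the ``guess-the-coin'' honest-verifier protocol with Holevo--Helstrom analysis, the hardness direction via round-by-round consistency of the simulator's output with the verifier's next move, and the polarization lemma tying the $a<b^2$ promise to the multiplicative behavior of fidelity under tensor powers are all the right ingredients. One minor point: in the hardness direction you should be explicit that the reduction produces an instance of the \emph{complement} of QSD (close on YES, far on NO), and that QSZK is closed under complement---this is itself a consequence of the polarization machinery and is part of Watrous's development. But since the present paper merely invokes the theorem, none of this detail is expected here; a one-line citation suffices.
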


The following theorem is a straightforward matter of assembling the above results together with the definition of indistinguishability obfuscation.

\begin{theorem}
If there exists a polynomial-time quantum statistical indistinguishability obfuscator, then PSPACE is contained in QSZK.
\end{theorem}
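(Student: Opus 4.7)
The plan is to reduce a PSPACE-complete problem to \textsf{Quantum State Distinguishability} (QSD) in polynomial time; since QSD is QSZK-complete by \expref{Theorem}{thm:QSD}, this yields $\text{PSPACE} \subseteq \text{QSZK}$. Combining $\text{QIP} = \text{PSPACE}$ (Jain--Ji--Upadhyay--Watrous) with \expref{Theorem}{thm:QCD} makes \textsf{Circuit Distinguishability} PSPACE-complete. I would first sharpen the gap by standard parallel-repetition-style amplification (as in the PSPACE-completeness arguments for \textsf{Circuit Distinguishability}), so that the problem remains PSPACE-hard under the stronger promise that $\|C - D\|_\diamond$ is either at most $2^{-n}$ or at least $2 - 2^{-n}$.

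Given such an instance $(C, D)$, pad the shorter circuit with identity gates so that $|C| = |D|$; this does not change the diamond distance. Let $(\algo O, \algo J)$ be the presumed polynomial-time statistical indistinguishability obfuscator, and use it to produce (circuits preparing) the states $\algo O(C)$ and $\algo O(D)$. Output this pair as a QSD instance, with threshold parameters $a = \negl(n)$ and $b = 2 - 2^{-n} - \negl(n)$. In the NO case ($\|C - D\|_\diamond \leq 2^{-n}$), the two-element ensemble $\{C, D\}$ is functionally equivalent in the sense of \expref{Definition}{def:indistinguishability}, so statistical indistinguishability forces $\|\algo O(C) - \algo O(D)\|_{\text{tr}} \leq \negl(n)$. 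In the YES case ($\|C - D\|_\diamond \geq 2 - 2^{-n}$), two applications of the functional-equivalence property of the translator together with the reverse triangle inequality give
$$
\bigl\| \algo J_{\algo O(C)} - \algo J_{\algo O(D)} \bigr\|_\diamond \;\geq\; \|C - D\|_\diamond - \negl(n),
$$
while contractivity of the diamond norm under the fixed CPTP map $\algo J$, applied on the advice register, gives
$$
\bigl\| \algo J_{\algo O(C)} - \algo J_{\algo O(D)} \bigr\|_\diamond \;\leq\; \bigl\| \algo O(C) - \algo O(D) \bigr\|_{\text{tr}}.
$$
Chaining these bounds shows $\|\algo O(C) - \algo O(D)\|_{\text{tr}} \geq 2 - 2^{-n} - \negl(n)$, so the QSD answer agrees with the original instance. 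The reduction runs in polynomial time because the obfuscator does.

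The main delicate step is the gap amplification: the obfuscator's functional-equivalence guarantee only gives negligible slack, so I really do need \textsf{Circuit Distinguishability} to remain PSPACE-hard with an exponentially small gap, rather than the constant $\epsilon > 0$ stated in \expref{Theorem}{thm:QCD}. This should follow from standard parallel repetition or tensor-power amplification of the underlying QIP protocol, but needs to be justified explicitly. The rest is essentially forced by the definitions: contractivity of the diamond norm under a fixed $\algo J$ handles the YES side, and statistical indistinguishability applied to a two-element circuit ensemble handles the NO side. A minor book-keeping point is recasting the pair $(\algo O(C), \algo O(D))$ into the precise mixed-state-pair format of QSD (which allows partial trace over ancillae), but this is routine.
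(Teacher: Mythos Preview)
Your proposal is correct and takes essentially the same route as the paper: reduce \textsf{Circuit Distinguishability} (QIP-complete, hence PSPACE-complete) to \textsf{Quantum State Distinguishability} by feeding both input circuits through the obfuscator, with the NO case handled by the statistical-indistinguishability guarantee and the YES case by functional equivalence of the translator. You are in fact more careful than the paper on two points it glosses over: the need to amplify the promise gap down to negligible (so that NO instances actually meet the obfuscator's functional-equivalence hypothesis), and the quantitative YES-case bound via contractivity of the trace norm under the fixed interpreter $\algo J$.
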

\begin{proof}
We will actually show QIP $\subset$ BQP$^{\text{QSZK}}$; since BQP is contained in QSZK and QIP = PSPACE~\cite{JJUW10}, the result will follow. We will solve \textsf{Circuit distinguishability} using a subroutine that solves \textsf{Quantum State Distinguishability}. Given the classical descriptions of $C$ and $D$, we run the obfuscator $\algo O$ on both to get states $\algo O(C)$ and $\algo O(D)$. Now we apply the subroutine, and output its result.

Note that, if $C$ and $D$ are elements of a pair of functionally-equivalent circuit ensembles, then $\algo O(C)$ and $\algo O(D)$ will be indistinguishable and the subroutine will output YES. On the other hand, if $C$ and $D$ are far from being functionally equivalent, then there exists some input on which they differ significantly. It follows that $\algo O(C)$ and $\algo O(D)$ must be distinguishable; if they were not, then $\algo J_{\algo O(C)}$ and $\algo J_{\algo O(D)}$ would be functionally equivalent, contradicting the translator conditions in the definition of the obfuscator.
\end{proof}

In addition, we can prove an impossibility result for the case of statistical obfuscators which can only obfuscate unitary circuits. We first recall the \textsf{Identity Check} problem. Given an $m$-qubit state $\rho$ and indices $l, k \geq 0$, we let $\tr_{l,k} [\rho]$ denote the result of tracing out qubits $l$ through $k$ of $\rho$. We adopt the convention that nothing is traced out (i.e., $\tr_{l,k}[\rho] = \rho$) if $l > m$.

\begin{problem} \emph{\textsf{Identity Check}}.\\
\indent Input: an $n$-qubit unitary quantum circuit $C$ and parameters $a$,$b$ so that $b-a\geq 1/poly(n)$. \\
\indent Promise: $\min_\alpha \| C - e^{i \alpha} I \|$ is less than $a$ or greater than $b$.\\
\indent Output: YES in the former case and NO in the latter.
\end{problem}

\begin{theorem}\label{thm:ID}
\emph{\cite{JWB03}} The problem Identity Check is coQMA-complete. 
\end{theorem}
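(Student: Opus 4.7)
The proof has two parts: coQMA containment and coQMA-hardness. For containment, I would place the complement problem (``$C$ is $b$-far from $e^{i\alpha}I$ for every $\alpha$'') in QMA. The witness consists of polynomially many copies of a state $\ket{\psi}$ with $|\langle\psi|C|\psi\rangle|$ bounded away from $1$; the verifier runs iterated Hadamard tests on these copies to estimate this quantity to inverse-polynomial accuracy, accepting iff the estimate is below a suitable threshold. Correctness rests on the identity $\min_\alpha \| (C - e^{i\alpha}I)\ket{\psi} \|^2 = 2 - 2|\langle\psi|C|\psi\rangle|$, together with the observation that $\min_\alpha \|C - e^{i\alpha}I\|$ is realized exactly by an equal superposition of two eigenvectors of $C$ whose eigenvalues are diametrically opposed within the spectrum's angular arc---both quantities evaluate to $2|\sin(\phi/2)|$, where $2\phi$ is the arc length. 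Standard QMA gap amplification then boosts the $1/\poly(n)$ promise gap to a constant.

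For coQMA-hardness, I would reduce from the canonical coQMA problem: given a QMA verifier $V$ acting on witness register $W$ and ancilla $A$ initialized to $\ket{0}$, decide whether no witness is accepted by $V$ with probability above a threshold. From $V$ one constructs a unitary $C_V$ on $W \otimes A \otimes R$ (with $R$ a fresh auxiliary qubit) whose spectrum is designed to reflect $V$'s maximum acceptance probability. A natural candidate is $C_V = V^\dagger \cdot \mathrm{CNOT}_{\mathrm{ans} \to R} \cdot V$: a direct calculation on states $\ket{\psi}_W \ket{0}_A \ket{0}_R$ yields $\langle\psi, 0, 0|C_V|\psi, 0, 0\rangle = 1 - p_\psi$, where $p_\psi$ is the acceptance probability of $V$ on witness $\ket{\psi}$. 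Hence if some witness is accepted with probability at least $c$, the state-level gap on that input is at least $\sqrt{2c}$, which forces $\min_\alpha \|C_V - e^{i\alpha}I\| \ge \sqrt{2c}$.

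The main technical obstacle is that the construction above is only well-behaved on the valid-input subspace (ancilla and $R$ initialized to $\ket{0}$), while the operator norm in \textsf{Identity Check} is taken over \emph{all} states; an adversarial input with nonzero ancilla can drive $\|C_V - e^{i\alpha}I\|$ arbitrarily large even when $V$ has no accepting witness. Resolving this requires an additional step that effectively confines attention to valid inputs while still producing a single polynomial-size unitary---for instance, a Kitaev-style clock/history construction that packages the QMA verification into a local Hamiltonian $H$, followed by a Trotter-style Hamiltonian-simulation step producing $C_V$ whose spectral spread tracks the ground energy of $H$ rather than its full spectral range. Verifying that the resulting $\min_\alpha\|C_V - e^{i\alpha}I\|$ preserves a $1/\poly(n)$ gap between the yes and no cases, and that the full reduction is polynomial-time computable from a classical description of $V$, constitutes the bulk of the technical work.
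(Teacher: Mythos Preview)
The paper does not prove this statement; it is quoted from Janzing, Wocjan, and Beth~\cite{JWB03} and used as a black box. There is therefore no in-paper proof to compare your proposal against.

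As for the proposal itself: your containment argument is essentially the right one, though the standard version is simpler---a single witness copy and a single Hadamard test (with a random phase on the control to pick up both real and imaginary parts of $\langle\psi|C|\psi\rangle$) already yields an inverse-polynomial gap, which Marriott--Watrous amplification then boosts without requesting additional copies from the prover. Asking for many copies invites the usual worry about entangled cheating witnesses, which you would otherwise have to rule out separately.

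For hardness, you have correctly diagnosed why the naive $V^\dagger\cdot\mathrm{CNOT}\cdot V$ construction fails: the operator norm in \textsf{Identity Check} ranges over the whole Hilbert space, so garbage in the ancilla register can make the circuit far from identity even when $V$ has no accepting witness. Your proposed remedy---pass through Kitaev's clock Hamiltonian and then exponentiate via Trotter---is in fact exactly the route taken in~\cite{JWB03}: they reduce from \textsc{Local Hamiltonian}, building a circuit that approximates $e^{-iHt}$ for a suitably scaled $t$, so that the angular spread of the spectrum of the resulting unitary tracks the ground energy of $H$. The point you leave open (controlling the \emph{top} of the spectrum so that the arc length, and hence $\min_\alpha\|C-e^{i\alpha}I\|$, is governed by $\lambda_{\min}(H)$ rather than by $\lambda_{\max}(H)$) is genuinely where the work lies, and is handled in~\cite{JWB03} by an explicit normalization and choice of $t$ together with Trotter error bounds. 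Your sketch is accurate in outline but, as you yourself flag, is not yet a proof; the missing details are precisely those of the cited reference.
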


\begin{theorem}
If there exists a polynomial-time quantum statistical indistinguishability obfuscator for unitary circuits, then coQMA is contained in QSZK.
\end{theorem}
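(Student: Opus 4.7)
The plan is to mimic the proof of the preceding theorem, replacing \textsf{Circuit distinguishability} with \textsf{Identity Check} (\expref{Theorem}{thm:ID}) as the source problem and \textsf{Quantum State Distinguishability} (\expref{Theorem}{thm:QSD}) as the target. That is, I will give a BQP reduction with a single QSD query from \textsf{Identity Check} to QSD, which, since BQP $\subseteq$ QSZK and QSZK is closed under the relevant oracle reductions, will yield coQMA $\subseteq$ BQP$^{\text{QSZK}} \subseteq$ QSZK.

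The reduction is as follows. On input a unitary circuit $C$ on $n$ qubits, first construct a ``size-matched trivial circuit'' $I_C$ by padding the syntactic identity circuit on $n$ qubits with trivial gates so that $|I_C| = |C|$. Compute the two (in general quantum) obfuscations $\algo O(C)$ and $\algo O(I_C)$, and feed them to the promised QSD subroutine. In the YES case of \textsf{Identity Check}, we have $\min_\alpha \|C - e^{i\alpha} I\|$ at most $a$; since global phases disappear at the level of CPTP maps, this forces $\|C - I_C\|_\diamond$ to be small as well. Provided $a$ is negligible, $\{C, I_C\}$ form functionally-equivalent same-size circuit ensembles in the sense of \expref{Definition}{def:indistinguishability}, so by the statistical indistinguishability property $\algo O(C)$ and $\algo O(I_C)$ are trace-close, and the QSD subroutine answers NO. In the NO case, $\|C - I_C\|_\diamond$ is bounded below by a $1/\poly(n)$ quantity. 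By the functional equivalence of the obfuscator we have $\|\algo J_{\algo O(C)} - C\|_\diamond$ and $\|\algo J_{\algo O(I_C)} - I_C\|_\diamond$ negligible, so $\algo J_{\algo O(C)}$ and $\algo J_{\algo O(I_C)}$ are diamond-far. Since $\algo J$ is a CPTP map which cannot increase trace distance, $\algo O(C)$ and $\algo O(I_C)$ must themselves be trace-far, and the QSD subroutine answers YES.

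The principal obstacle is a parameter-matching issue: the obfuscator's functional-equivalence clause triggers only when $\|C - D\|_\diamond \leq \negl(n)$, whereas \textsf{Identity Check} a priori allows a polynomial promise gap $b - a \geq 1/\poly(n)$ with $a$ potentially constant. To apply the obfuscator's indistinguishability condition in the YES case, I will first amplify the Identity Check promise gap so that $a$ is negligible while $b$ remains, e.g., constant; this can be done using standard parallel-repetition / gap-amplification techniques for coQMA-complete problems, preserving coQMA-hardness. A secondary subtlety is the global phase in the YES case, but this is automatically absorbed: passing from unitaries to the CPTP maps they implement quotients by global phase, so the identity circuit $I_C$ (without phase) and $e^{i\alpha} I$ induce identical superoperators, and the functional-equivalence statement in \expref{Definition}{def:indistinguishability} is genuinely about diamond distance of CPTP maps. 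With these two points taken care of, the reduction gives a QPT algorithm deciding \textsf{Identity Check} using a single call to a QSZK oracle, completing the proof.
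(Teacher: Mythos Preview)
Your proposal is correct and follows essentially the same route as the paper: reduce \textsf{Identity Check} to \textsf{Quantum State Distinguishability} by obfuscating both $C$ and a size-matched identity circuit, use the indistinguishability guarantee in the YES case, and use functional equivalence together with contractivity of $\algo J$ in the NO case. Your handling is in fact somewhat more careful than the paper's: the paper applies the indistinguishability property directly without discussing the parameter mismatch (that \expref{Definition}{def:indistinguishability} requires $\|C-D\|_\diamond \leq \negl(n)$, not merely $\leq a$ for inverse-polynomial $a$), whereas you explicitly propose amplifying the Identity Check promise so that $a$ becomes negligible while $b$ stays inverse-polynomial; you also make the global-phase point explicit.
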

\begin{proof}
We will actually show coQMA $\subset$ BQP$^{\text{QSZK}}$; since BQP is contained in QSZK, the result will follow. Let $a$ and $b$ satisfy $b-a = 1 / \text{poly}(n)$. We will solve Identity Check using a subroutine that solves Quantum State Distinguishability. 

Let $C$ be the input, i.e., a classical description of an $n$-qubit quantum circuit. Create an identity circuit $D$ with an equal number of inputs as $C$, and of equal length to $C$. Let $O_C$ be a circuit that initializes a register with the classical state $\ket{C}$ containing the classical description of $C$, and applies the circuit of $\mathcal O$ which corresponds to the input length $|C|$. Likewise, let $O_D$ be a circuit that initializes a register with the classical state $\ket{D}$ containing the classical description of $D$, and applies the circuit of $\mathcal O$ which corresponds to the input length $|D| = |C|$. Note that, after tracing out ancillas, the outputs of these circuits are given by
$$
\tr_\text{anc.} \bigl[O_C\ket{0}\bra{0}O_C^\dagger\bigr] = \mathcal O(C)
\qquad \text{and} \qquad
\tr_\text{anc.} \bigl[O_D\ket{0}\bra{0}O_D^\dagger\bigr] = \mathcal O(D)\,.
$$
Now apply the subroutine for solving quantum state distinguishability to the pair $(O_C, O_D)$. If it says ``close'', we output YES; otherwise we output NO. Let's show that this has solved $(a, b)$-identity-check. Note that the states $\mathcal O(C)$ and $\mathcal O(D)$ must have the same number of qubits, and denote that number by $m$.
\begin{itemize}
\item \textbf{completness.} In this case, the obfuscated states satisfy $\|\mathcal O(C) - \mathcal O(D)\|_\text{tr} \leq \alpha$.  By the definition of the induced trace norm, this implies that $\|\mathcal J_{\mathcal O(C)}^n - \mathcal J_{\mathcal O(D)}^n\|_\diamond \leq \alpha$. By functional equivalence for $C$ and $D$ and the triangle inequality, it follows that $\|U_C - U_D\|_\diamond = \|U_C - I\|_\diamond \leq \alpha$, as desired.

\item \textbf{soundness.} In this case, the obfuscated states satisfy $\|\mathcal O(C) - \mathcal O(D)\|_\text{tr} \geq \beta$. We claim that this implies $\|U_C - U_D\|_\diamond > b$. Suppose this is not the case, i.e., that these operators are in fact close; then by the indistinguishability property, it would follow that $\mathcal O(C)$ and $\mathcal O(D)$ are close as well, a contradiction.
\end{itemize}
The above amounts to a BQP$^\text{QSZK}$ protocol for a coQMA-hard problem, thus placing coQMA in QSZK.
\end{proof}

\subsection{Application: quantum witness encryption}

We now give an interesting application of the surviving case of quantum indistinguishability obfuscation, i.e., the computational variant.

The classical idea of witness encryption was first studied in~\cite{GGSW13}; its connection to indistinguishability obfuscation was first considered in~\cite{GGHRSW13}. In the quantum case, we set up the problem as follows. Suppose Alice wishes to encrypt a quantum state $\rho$, but not to a particular key or for a particular person; instead, the encryption is tied to a challenge question, and anyone that can answer the question correctly can decrypt the plaintext. The question will be of a particularly quantum nature: a correct answer will be a quantum state, e.g., the ground state of some Hamiltonian.

More formally, we consider a QMA language $L$, 
and would like to enable Alice to encrypt her state $\rho$ using a particular problem instance $x$. 
If $x$ is a ``yes" instance, 
then we'd like to allow Bob, who holds a witness state, to be able to decrypt Alice's message.  On the other hand, if $x$ is a no instance, then we demand that no QPT can distinguish between encryptions of any two quantum states  on the same number of qubits.  Interestingly, the definition says nothing about the case where $x$ is a yes instance but a witness is not known. While this may seem counterintuitive, the classical primitive has a number of natural applications (e.g., public-key encryption and identity encryption, see \cite{GGSW13}). It is likely that the quantum primitive has the same or similar applications, but we will not explore that question here.

We now show that witness encryption for QMA is possible, assuming the existence of a quantum computational-indistinguishability obfuscator. It is well-known (see, e.g.,~\cite{GN13}) that QMA contains languages $L$ which have a poly-time uniform circuit family $\mathcal C_L$ with completeness $1-2^{-\Omega(n)}$ and soundness $2^{-\Omega(n)}$. Given an instance $x$ of such a language $L$, and a quantum state $\rho$, consider the quantum circuit $Q_{x,\rho}(\sigma)$, which runs the appropriate circuit from $\mathcal C_L$ and outputs $\rho$ on accept, and $\ket{0^n}$ otherwise.  We claim the computational-indistinguishability obfuscation $\algo{O}(Q_{x,\rho})$ is a valid witness encryption for $L$ and $\rho$.  Correctness of decryption is clear, since the ability to provide a valid witness for $L$ allows Bob to use $\algo{O}(Q_{x,\rho})$ to obtain Alice's state $\rho$.  Of course, if $x$ is not in $L$, then no witness will suffice; more precisely, $\|Q_{x,\rho_1} - Q_{x, \rho_2}\| \leq 2^{-\Omega(n)}$ for any two quantum states $\rho_1, \rho_2$ on the same number of qubits. By the indistinguishability condition of \expref{Definition}{def:indistinguishability}, the obfuscation of $Q_{x,\rho_1}$ will be computationally indistinguishable from the obfuscation of $Q_{x,\rho_2}$.

\section{Conclusion}

\subsection{Open problems}

The central remaining open problem is whether quantum-mechanical means of obfuscation are possible, within the restricted framework placed by our results. In the classical setting, it is known that black-box obfuscation and statistical obfuscation are impossible; on the other hand, it is generally believed that computational-indistinguishability obfuscation can be achieved. In the quantum setting, on the other hand, much less is known. Although our results place some significant restrictions, several possibilities remain. For example, it is conceivable that \emph{information-theoretically secure black-box obfuscation} of a program is possible using quantum-mechanical means. Moreover, such a construction may even enable the user to use the obfuscated state to evaluate the program a polynomial number of times. Such an obfuscator would be tremendously powerful, and would be another example of quantum supremacy in the world of cryptography. It is difficult to imagine how such an obfuscation would function for arbitrary quantum programs (due to restrictions placed by the monogamy of entanglement); it is perhaps more reasonable to imagine it for classical functions. In that case, one could imagine using the state to perform the computation, then copying the (classical) output of the program, and restoring the obfuscated state by undoing the unitary circuit.

In the indistinguishability setting, constructions for obfuscating quantum programs are not known. It is conceivable that classical ideas (such as those in~\cite{GGHRSW13}) can be translated to the quantum setting. However, a key ingredient in~\cite{GGHRSW13} is fully-homomorphic encryption, which is yet to be achieved quantumly---although significant progress in this direction was recently made by Broadbent and Jeffery~\cite{BJ15}.

In terms of applications, our results raise another interesting question: what applications can we envision for obfuscators whose outputs are not cloneable or reproducible? Many of the standard applications are restricted in the case. On the other hand, one may be able to design applications where the inability to copy the obfuscator's output is a desired feature. In fact, one can imagine quantum obfuscators where the number of executions is highly limited in a manner analogous to one-time programs (see, e.g.,~\cite{BDG13}); this could also be of use in cryptographic settings.

\subsection{Acknowledgements}

G.\,A. would like to thank Stacey Jeffery and Anne Broadbent for many engaging discussions. G.\,A. was supported by a Sapere Aude grant of the Danish Council for Independent Research, the ERC Starting Grant "QMULT" and the CHIST-ERA project "CQC". B.\,F would like to thank Brad Lackey, Daniel Apon and Jonathan Katz for helpful conversations.  This work was supported by the Department of Defense.  
\bibliography{QuantumCrypto}

\end{document}